\newtheorem{proposition}{Proposition}
\newtheorem{definition}{Definition}
\newtheorem{remark}{Remark}
\newcommand{\Sagt} {\mathcal{N}_\text{a}}
\newcommand{\Sanc} {\mathcal{N}_\text{b}}
\newcommand{\Nagt} {N_\text{a}}
\newcommand{\Nanc} {N_\text{b}}
\newcommand{\tr} {\text{tr}}
\newcommand{\tx} {\text} 
\newcommand{\T} {^\text{T}}
\newcommand{\M} {\mathbf{M}}
\newcommand{\I} {\mathbf{I}}
\newcommand{\Pkj} {x_{kj}}
\newcommand{\Pk} {x_k}
\newcommand{\Pkjmin} {P_{kj}^{\min}}
\newcommand{\Pkjtot} {P_{kj}^{\max}}
\newcommand{\Pjtot} {P_j^\text{tot}}
\newcommand{\Ptot} {P^\text{tot}}
\newcommand{\Pptot} {P^{'\text{tot}}}
\newcommand{\Je} {\mathbf{J}_\text{e}}
\newcommand{\Jeinv} {\mathbf{J}_\text{e}^{-1}}
\newcommand{\Jr} {\mathbf{J}_\text{r}}
\newcommand{\Q} {\mathbf{Q}_\text{r}}
\newcommand{\rhokj} {\rho_{kj}}
\newcommand{\dest} {\hat{d}}
\newcommand{\phiest} {\hat{\phi}}
\newcommand{\xiest} {\hat{\xi}}
\newcommand{\dwst} {\widetilde{d}}
\newcommand{\phiwst} {\widetilde{\phi}}
\newcommand{\muwst} {{\widetilde{\mu}}}
\newcommand{\xikj} {\xi_{kj}}
\newcommand{\xiwstkj} {\widetilde{\xi}_{kj}}
\newcommand{\uvec} {\mathbf{u}}
\newcommand{\PI} {\mathscr{P}_\tx{1}}
\newcommand{\PISDP} {\PI^\tx{SDP}}
\newcommand{\PR} {\mathscr{P}_\tx{R-0}}
\newcommand{\PRI} {\mathscr{P}_\tx{R-1}}
\newcommand{\PRISDP} {\PRI^\tx{SDP}}
\newcommand{\PRIDI} {\mathscr{P}_{\tx{R-1},k}^\tx{(I)}}
\newcommand{\PRIDII} {\PRI^\tx{(II)}}
\newcommand{\PII} {\mathscr{P}_\tx{2}}
\newcommand{\PIICP} {\PII^\tx{SOCP}}
\newcommand{\PRII} {\mathscr{P}_\tx{R-2}}
\newcommand{\PRIICP} {\PRII^\tx{SOCP}}
\newcommand{\PRIIDI} {\PRII^\tx{(I)}}
\newcommand{\PRIIDII} {\PRII^\tx{(II)}}
\begin{document}

\title{Robust Power Allocation for Energy-Efficient Location-Aware Networks}

\author{\IEEEauthorblockN{William~Wei-Liang~Li,~\IEEEmembership{Member,~IEEE,} Yuan~Shen,~\IEEEmembership{Student~Member,~IEEE,} Ying~Jun~(Angela)~Zhang,~\IEEEmembership{Senior~Member,~IEEE,} and Moe~Z.~Win,~\IEEEmembership{Fellow,~IEEE}}
\\[-0.5em]
    \thanks{Manuscript submitted December 20, 2011; revised June 19, 2012, and December 21, 2012; accepted December 21, 2012. This research was supported, in part, by the GRF grant (Project number 419509) established under the University Grants Committee (UGC) of Hong Kong Special Administrative Region, the National Science Foundation under Grant ECCS-0901034, the Office of Naval Research under Grant N00014-11-1-0397, and MIT Institute for Soldier Nanotechnologies. This paper was presented in part at the IEEE International Conference on Communications, Kyoto, Japan, June 2011 and Ottawa, Canada, June 2012.}
	\thanks{W.~W.-L.~Li was with Department of Information Engineering, The Chinese University of Hong Kong, Shatin, New Territories, Hong Kong, and is now with Department of Electrical and Computer Engineering, University of California, Santa Barbara, CA 93106 USA (e-mail: {wlli@ieee.org}).}
	\thanks{Y.~J.~(A)~Zhang is with Department of Information Engineering, The Chinese University of Hong Kong, Shatin, New Territories, Hong Kong (e-mail: {yjzhang@ie.cuhk.edu.hk}).}
	\thanks{Y.~Shen and M.~Z.~Win are with the Laboratory for Information and Decision Systems (LIDS), Massachusetts Institute of Technology, 77 Massachusetts Avenue, Cambridge, MA 02139 USA (e-mail: {shenyuan@mit.edu}, {moewin@mit.edu}).}
}

\maketitle 

\markboth{IEEE/ACM Transactions on Networking Month Date year}{Li \MakeLowercase{\textit{et al.}}: Robust Power Allocation for Energy-Efficient Location-Aware Networks}



\setcounter{page}{1}

\vspace{-0.5cm}
\begin{abstract}
In wireless location-aware networks, mobile nodes (agents) typically obtain their positions through ranging with respect to nodes with known positions (anchors). Transmit power allocation not only affects network lifetime, throughput, and interference, but also determines localization accuracy. In this paper, we present an optimization framework for robust power allocation in network localization to tackle imperfect knowledge of network parameters. In particular, we formulate power allocation problems to minimize the squared position error bound (SPEB) and the maximum directional position error bound (mDPEB), respectively, for a given power budget. We show that such formulations can be efficiently solved via conic programming. Moreover, we design an efficient power allocation scheme that allows distributed computations among agents. The simulation results show that the proposed schemes significantly outperform uniform power allocation, and the robust schemes outperform their non-robust counterparts when the network parameters are subject to uncertainty.
\end{abstract}

\begin{keywords}
Localization, wireless networks, resource allocation, semidefinite programming (SDP), second-order conic programming (SOCP), robust optimization.
\end{keywords}

%
%
%
%

%
%

\section{Introduction}\label{sec:intro}


Positional information is of critical importance for future wireless networks, which will support an increasing number of location-based applications and services \cite{WinConMazSheGifDarChi:J11,SayTarKha:05,PahLiMak:02,PatAshKypHerMosCor:05,SheWin:J10a,SheWymWin:J10, GezTiaGiaKobMolPooSah:05,
VerDarMazCon:B08,WymLieWin:J09}. Example applications include cellular positioning, search and rescue work, blue-force tracking, etc., covering civilian life to military operations. In GPS-challenged environments, wireless network localization typically refers to a process that determines the positions of mobile nodes (agents) based on the measurements with respect to mobile/static nodes with known positions (anchors), as illustrated in Fig.~\ref{fig:topo}. With the rapid development of advanced wireless techniques, wireless network localization has attracted numerous research interests in the past decades \cite{DarConBurVer:07,DarConFerGioWin:J09,PaoGioChiMinMon:08,RabOppDen:06a,ConGueDarDecWin:J12,YuMonRabCheOpp:06,MazLorBah:J10,SheMazWin:J12,NicFon:09,KhaKarMou:09,KhaKarMou:10}.


\begin{figure}[t!]
	\vspace{1em}
	\centering
	\includegraphics[height=4cm]{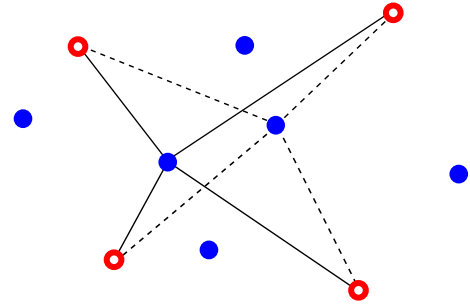}
	\caption{\label{fig:topo}Location-aware networks: the anchors (red circle) localize the agents (blue dot) based on inter-node range measurements.}
	\vspace{-1em}
\end{figure}


Localization accuracy is a critical performance measure of wireless location-aware networks. In recent work \cite{SheWin:J10a,SheWymWin:J10}, the fundamental limits of wideband localization have been derived in terms of the squared position error bound (SPEB) and directional position error bound (DPEB). It shows that localization accuracy is related to several aspects of design, including network topology, signal waveforms, and transmit power. 
Power allocation for wireless network localization plays a critical role in reducing localization errors or energy consumption, when the nodes are subject to limited power resources or quality-of-service (QoS) requirements \cite{MesPooSch:07,GorKinKymRubWanZus:10,GorWalZus:11}. 
Optimal or near-optimal trade-off between localization errors and energy consumption can be obtained by optimization methods, which have played an important role in maximizing communication and networking performance under limited resources \cite{AhuMagOrl:93, KelMauTan:98, BoyVan:04, LuoYu:06, LiZhaSoWin:J10, HuaRao:13, Fos:01,She:01}. The authors in \cite{SheWin:C08a} formulated several optimization problems for anchor power allocation in wideband localization systems, and derived the optimal solution for single-agent networks. In \cite{LiSheZhaWin:C11}, it exploited the geometrical interpretation of localization information to minimize the maximum DPEB (mDPEB).\footnote{The mDPEB characterizes the maximum position error of an agent over all directions.} In \cite{GodPetPoo:11}, it investigated the localization using MIMO radar systems, and adopted the constraint relaxation and domain decomposition methods to obtain sub-optimal solutions for power allocation. In general, how to optimally allocate the transmit power in location-aware networks still remains as an open problem.


Power allocation schemes should be adapted to the instantaneous network conditions, such as network topology and channel qualities, for optimizing the localization performance. Previous work 
on power allocation in location-aware networks
assumes that the network parameters such as nodes' positions and channel conditions are perfectly known \cite{SheWin:C08a, LiSheZhaWin:C11, GodPetPoo:11}. However, these parameters are obtained through estimation and hence subject to uncertainty. The power allocation based on imperfect knowledge of network parameters often leads to sub-optimal or even infeasible solutions in realistic networks \cite{BerBroCar:11, BenGhaNem:B09, QueWinChi:J10}. Therefore, it is essential to design a robust scheme to combat the uncertainty in network parameters.


In this paper, we present an optimization framework for robust power allocation in network localization to tackle imperfect knowledge of network parameters. Specifically, we treat the fundamental limits of localization accuracy, i.e., SPEB and mDPEB, as the performance metrics. The main contributions are summarized as follows.
\begin{itemize}
	\item We formulate optimization problems for power allocation to minimize SPEB/mDPEB subject to limited power resources, and prove that these formulations can be transformed into conic programs.\footnote{Conic programs can be efficiently solved by off-the-shelf optimization tools \cite{LuoStuZha:00,LuoYu:06}}
	\item We propose a robust optimization method for the worst-case SPEB/mDPEB minimization in the presence of parameter uncertainty. The proposed robust formulations retain the same form of conic programs as their non-robust counterparts.
	\item We develop a distributed algorithm for robust power allocation, which decomposes the original problem into several subproblems enabling parallel computations among all the agents without loss of optimality.
\end{itemize}


The rest of the paper is organized as follows. In Section \ref{sec:model}, we describe the system model and introduce the performance metrics. In Section \ref{sec:optimal}, we formulate the power allocation problems into conic programs. In Section \ref{sec:robust}, robust power allocation schemes are proposed to combat the uncertainty in network parameters. In Section \ref{sec:distributed}, we further decompose our robust formulation into several subproblems that can be independently solved by each agent. In Section \ref{sec:simu}, the performance of the proposed schemes is investigated through simulations. Finally, the paper is concluded in Section \ref{sec:conclude}.

\emph{Notations:} We use lowercase and uppercase bold symbols to denote vectors and matrices, respectively; $\det(\mathbf{A})$ and $\text{tr}(\mathbf{A})$ denote the determinant and trace of matrix $\mathbf{A}$, respectively; the superscript $(\cdot)\T$ and $\|\cdot\|$ denote the transpose and Euclidean norm of its argument, respectively; matrices $\mathbf{A}\succeq\mathbf{B}$ denotes that $\mathbf{A}-\mathbf{B}$ is positive semidefinite. We define the unit vector $\uvec(\phi)=[\,\cos\phi \,~\sin\phi\,]\T$. We use calligraphic symbols, e.g., $\mathcal{N}$, to denote sets, and $\mathbb{E}\{\cdot\}$ and $\Pr\{\cdot\}$ to denote the expectation and probability operators, respectively.

\section{System Model}\label{sec:model}
In this section, we describe the system model, and introduce two performance metrics of location-aware networks.

\subsection{Network Settings}
Consider a 2-D location-aware network consisting of $\Nagt$ agents and $\Nanc$ anchors, where the sets of agents and anchor are denoted by $\Sagt=\{1,2,\ldots,\Nagt\}$ and  $\Sanc=\{\Nagt+1,\Nagt+2,\ldots, \Nagt+\Nanc\}$, respectively. The 2-D position of node $k$ is denoted by $\mathbf{p}_k$. The angle and distance between nodes $k$ and $j$ are given by $\phi_{kj}$ and $d_{kj}$, respectively. The anchors are mobile/static nodes with known positions, and subject to limited power resources. The agents aim to determine their positions based on the radio signals transmitted from the anchors. For instance, agents can obtain the signal metrics such as time-of-arrival (TOA) from the received signals, and then calculate their positions via triangulation \cite{SheWin:J10a}.

The multipath received waveform at agent $k$ from anchor $j$ is modeled as \cite{SheWin:J10a}
\begin{equation}\label{eq:waveform}
	r_{kj}(t) = \sum_{l=1}^{L_{kj}} \sqrt{\Pkj} \cdot  \alpha_{kj}^{(l)} \, s\big(t-\tau_{kj}^{(l)}\big) + z_{kj}(t), \; t\in[0,T_\text{ob})\!
\end{equation}
where $\Pkj$ is the power of the transmit waveform from anchor $j$ to agent $k$, $s(t)$ is a known transmit waveform, $\alpha_{kj}^{(l)}$ and $\tau_{kj}^{(l)}$ are the amplitude and delay, respectively, of the $l$th path, $L_{kj}$ is the number of multipath components, $z_{kj}(t)$ represents additive white Gaussian noise (AWGN) with two-side power spectral density $N_0/2$, and $[0,T_\text{ob})$ is the observation interval. 

We consider that the measurements between anchors and agents do not interfere each other by using medium access control, and the network is synchronized such that the inter-node distance is estimated using one-way time-of-flight (TOF).\footnote{There are two common ways for inter-node distance estimation based on TOA: one-way TOF (only anchor transmits) or round-trip TOF (both anchor and agent transmit). The former requires anchors and agents to be synchronized for  distance estimation.} Our work can be extended to asynchronous networks where round-trip TOF is employed for distance estimation, and it will be discussed in Section \ref{sec:optimal}.

\subsection{Position Error Bound}
The SPEB introduced in \cite{SheWin:J10a} is a performance metric that characterizes the localization accuracy, defined as
\begin{equation}\label{eq:SPEB}
	\mathcal{P}(\mathbf{p}_k) \triangleq \tr\big\{\Jeinv(\mathbf{p}_k;\{\Pkj\})\big\}
\end{equation}
where $\Je(\mathbf{p}_k;\{\Pkj\})$ is the equivalent Fisher information matrix (EFIM) for agent $k$'s position $\mathbf{p}_k$. Using the information inequality \cite{Van:B68}, we can show that the squared position error is bounded below as
\begin{equation*}
	\mathbb{E}\left\{\|\hat{\mathbf{p}}_k-\mathbf{p}_k\|^2\right\}
	\geq \mathcal{P}(\mathbf{p}_k)
\end{equation*}
where $\hat{\mathbf{p}}_k$ is an unbiased estimate of the position $\mathbf{p}_k$. The EFIM in \eqref{eq:SPEB} can be derived based on the received waveform in \eqref{eq:waveform} as a $2\times 2$ matrix \cite{SheWin:J10a}
\begin{equation}\label{eq:EFIM}
	\Je(\mathbf{p}_k;\{\Pkj\}) = \sum_{j\in\Sanc} \xikj \, \Pkj \, \Jr(\phi_{kj})
\end{equation}
where $\Jr(\phi_{kj}) = \uvec(\phi_{kj}) \uvec(\phi_{kj})\T$ is a $2\times 2$ matrix, 
and $\xikj$ is a positive coefficient determined by the channel properties, given by,\footnote{The derivation of $\xikj$ is given in \cite{SheWin:J10a}, and this parameter can be obtained through channel estimation.}
\begin{equation}\label{eq:ch-para}
	\xikj = \frac{8\pi^2 W^2}{c^2} (1-\chi_{kj}) \frac{(\alpha_{kj}^{(1)})^2}{N_0}
\end{equation}
with $W$ as the effective bandwidth, $c$ as the light speed, $\chi_{kj}$ as path-overlap coefficient characterizing the effect of multipath propagation for localization, $N_0$ as the noise spectrum density.\footnote{Although the structure of SPEB is derived based on the received waveforms for wideband systems in \cite{SheWin:J10a}, it is also observed in other TOA- or RSS-based localization systems, e.g., \cite{JouDarWin:J08,GodHaiBlu:10, QiKobSud:06, MazLorBah:J10}.}

Since the SPEB characterizes the fundamental limit of localization accuracy and is achievable in high SNR regimes, we will use it as a performance metric for location-aware networks, and allocate the transmit power to optimize the system performance by minimizing the SPEB. 


%

\subsection{Directional Decoupling of SPEB}\label{sec:mDPEB}
We then introduce the notations of DPEB and mDPEB \cite{SheWymWin:J10}.
The EFIM \eqref{eq:EFIM} can be written, by eigen decomposition, as
\begin{align*}
	\Je(\mathbf{p}_k;\{\Pkj\}) 
	&= \mathbf{U}_{\theta_{k}} 
	\begin{bmatrix} \mu_{1,k} & 0\\ 0 & \mu_{2,k} \end{bmatrix}
	\mathbf{U}_{\theta_{k}}\T
\end{align*}
where $\mu_{1,k}$ and $\mu_{2,k}$ are the ordered eigenvalues of EFIM ($\mu_{1,k}\geq\mu_{2,k}$), given by
\begin{align*}
	\mu_{1,k}, \mu_{2,k}
	=\frac{1}{2}\bigg( \sum_{j\in\Sanc} \xikj \, \Pkj
	\pm \Big\| \sum_{j\in\Sanc} \xikj \, \Pkj \,\uvec(2\phi_{kj})
	\Big\| \bigg)
\end{align*}
and $\mathbf{U}_{\theta_{k}}$ is a rotation matrix with angle $\theta_{k}$, given by
\begin{equation*}
	\mathbf{U}_{\theta_{k}} = \begin{bmatrix}\cos\theta_{k} & -\sin\theta_{k} \\
	\sin\theta_{k} & \cos\theta_{k} \end{bmatrix}.
\end{equation*}

Geometrically, the EFIM for agent $k$ can be viewed as an information ellipse given by $\{\mathbf{z}\in\mathbb{R}^2: \mathbf{z}\T \Je^{-1}(\mathbf{p}_k;\{\Pkj\})  \mathbf{z} =1\}$ (see Fig. \ref{fig:ellipse}), where $2\sqrt{\mu_{1,k}}$ and $2\sqrt{\mu_{2,k}}$ give the major axis and minor axis, respectively. 

\begin{definition}
	The directional position error bound (DPEB) of agent $k$ along the direction $\varphi$ is defined as
	\begin{equation*}
		\mathcal{P}(\mathbf{p}_k;\varphi) \triangleq 
		\uvec(\varphi)\T [\Jeinv(\mathbf{p}_k,\{\Pkj\})] \uvec(\varphi).
	\end{equation*}
\end{definition}

\begin{proposition}\label{thm:mDPEB}
	The mDPEB of agent $k$ is 
	\begin{equation}\label{eq:mDPEB}
		\max_{\varphi\in[0,2\pi)} \left\{\mathcal{P}(\mathbf{p}_k;\varphi)\right\} = \frac{1}{\mu_{2,k}}.
	\end{equation}
\end{proposition}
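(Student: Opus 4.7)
The plan is to diagonalize $\Je^{-1}$ using the eigen-decomposition already given for $\Je$, and then to reduce the maximization to a one-variable trigonometric optimization whose answer is read off by inspection.

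First I would write $\Je^{-1}(\mathbf{p}_k;\{\Pkj\}) = \mathbf{U}_{\theta_k}\operatorname{diag}(1/\mu_{1,k},\,1/\mu_{2,k})\mathbf{U}_{\theta_k}\T$, which is immediate from the eigen-decomposition of $\Je$ stated above (and the fact that $\mu_{1,k},\mu_{2,k}>0$ whenever agent $k$ is localizable). Substituting into the definition of the DPEB gives
\begin{equation*}
\mathcal{P}(\mathbf{p}_k;\varphi) = \bigl(\mathbf{U}_{\theta_k}\T\uvec(\varphi)\bigr)\T \operatorname{diag}\!\bigl(1/\mu_{1,k},\,1/\mu_{2,k}\bigr)\bigl(\mathbf{U}_{\theta_k}\T\uvec(\varphi)\bigr).
\end{equation*}

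Next I would use the rotation identity $\mathbf{U}_{\theta_k}\T \uvec(\varphi) = \uvec(\varphi-\theta_k)$, which follows directly from the angle-subtraction formulas applied to the explicit form of $\mathbf{U}_{\theta_k}$ given in the excerpt. Setting $\psi\triangleq \varphi-\theta_k$, this collapses the quadratic form to
\begin{equation*}
\mathcal{P}(\mathbf{p}_k;\varphi) = \frac{\cos^2\psi}{\mu_{1,k}} + \frac{\sin^2\psi}{\mu_{2,k}}.
\end{equation*}

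Finally, since $\mu_{1,k}\geq \mu_{2,k}>0$ implies $1/\mu_{1,k}\leq 1/\mu_{2,k}$, the right-hand side is a convex combination of $1/\mu_{1,k}$ and $1/\mu_{2,k}$ with weights $\cos^2\psi$ and $\sin^2\psi$; it is therefore bounded above by $1/\mu_{2,k}$, with equality attained at $\psi=\pi/2$, i.e., $\varphi=\theta_k+\pi/2$ (the direction of the minor axis of the information ellipse, which geometrically is the direction of largest positional uncertainty). As $\varphi$ ranges over $[0,2\pi)$ the supremum is achieved, proving \eqref{eq:mDPEB}. There is no real obstacle here: the entire argument is routine linear algebra once the orthogonal change of basis is in place, and the only point worth stating carefully is the geometric interpretation that the worst-case direction is orthogonal to the major axis of the information ellipse.
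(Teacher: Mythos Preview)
Your proof is correct and follows essentially the same approach as the paper: diagonalize $\Je^{-1}$ via the given eigen-decomposition, absorb the rotation into the unit vector by the change of variable $\psi=\varphi-\theta_k$, and maximize the resulting convex combination $\mu_{1,k}^{-1}\cos^2\psi+\mu_{2,k}^{-1}\sin^2\psi$ using $\mu_{1,k}\geq\mu_{2,k}$. Your writeup is in fact a bit cleaner than the paper's, which uses the same steps but with slightly muddled variable names in the final substitution.
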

\begin{proof}
	See Appendix \ref{apd:mDPEB}.
\end{proof}

Proposition \ref{thm:mDPEB} can also be understood via the information ellipse  of EFIM. The information for localization achieves the maximum along the major axis and the minimum along the minor axis. Due to the reciprocal, the SPEB is dominated by the mDPEB, which is the inverse of the smaller eigenvalue of the EFIM. Therefore, in order to improve the localization performance, it is more helpful to maximize the smaller eigenvalue of EFIM, equivalently to minimize the mDPEB that characterizes the maximum position error of an agent over all directions. We will use mDPEB as another performance metric of localization accuracy.


\begin{figure}[t!]
	\centering
	\psfrag{x}[][]{x}
	\psfrag{y}[][]{y}
	\psfrag{mu1}[][]{\hspace{0.1cm}$\sqrt{\mu_{1,k}}$}
	\psfrag{mu2}[][]{\hspace{-0.2cm}$\sqrt{\mu_{2,k}}$}
	\psfrag{beta}[][]{$\theta_k$}
	\vspace{0.5em}
	\includegraphics[height=4cm]{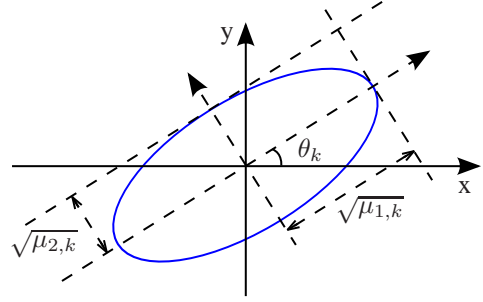}
	\caption{\label{fig:ellipse}Geometrical interpretation of the EFIM for agent $k$.}
\end{figure}

\section{Optimal Power Allocation via \\Conic Programming}\label{sec:optimal}
In this section, we formulate the power allocation problem using SPEB and mDPEB as the objective functions, respectively. We show that the SPEB minimization is a semidefinite program (SDP) and the mDPEB minimization is a second-order conic program (SOCP). 

\subsection{Problem Formulation Based on SPEB}
We first consider the problem of optimal power allocation that minimizes the total SPEB while the network is subject to a budget of power consumption. The problem can be formulated as\footnote{The structure of the problem retains with additional linear constraints, such as the maximum transmit power from anchor $j$ to agent $k$, and the maximum total transmit power from anchor $j$. See Remark \ref{rmk:linear} for more discussion.}
\begin{align}
\PI: \quad	
\min_{\{\Pkj\}} \quad& \sum_{k\in\Sagt} \tr\big\{\Jeinv(\mathbf{p}_k;\{\Pkj\})\big\}
	\label{eq:P-obj}
	\\
	\text{s.t.} \quad
			& \sum_{k\in\Sagt} \sum_{j\in\Sanc} \Pkj \leq \Ptot
			\label{eq:P-con-ttl}
			\\
			& \Pkj \geq 0, \quad \forall k\in\Sagt, ~\forall j\in\Sanc
			\label{eq:P-con-link}
\end{align}
where 
\eqref{eq:P-con-ttl} gives the total transmit power budget $\Ptot$ for all the anchors. We first show the convexity of the above problem in the following proposition.
\begin{proposition}\label{thm:convexity}
	The problem $\PI$ is convex in $\Pkj$.
\end{proposition}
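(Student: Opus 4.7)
The plan is to verify convexity in two pieces: first that the feasible set is convex, then that the objective is a convex function of $\{\Pkj\}$. The feasible set is the intersection of the nonnegativity orthant $\Pkj\geq 0$ and the single affine inequality $\sum_{k,j}\Pkj\leq\Ptot$, so it is manifestly a convex polytope and no further argument is required there. The work is in the objective.

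Next I would observe, using \eqref{eq:EFIM}, that for each agent $k$ the EFIM
\begin{equation*}
	\Je(\mathbf{p}_k;\{\Pkj\}) \;=\; \sum_{j\in\Sanc} \xikj\,\Pkj\,\Jr(\phi_{kj})
\end{equation*}
is an \emph{affine} (in fact linear) map of the variables $\{\Pkj\}$, with $\xikj>0$ and $\Jr(\phi_{kj})=\uvec(\phi_{kj})\uvec(\phi_{kj})\T\succeq\mathbf{0}$. Hence each summand in the objective is a composition of the scalar matrix function $\mathbf{A}\mapsto\tr(\mathbf{A}^{-1})$ with an affine map into the positive semidefinite cone, and the sum over $k\in\Sagt$ preserves convexity. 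Since affine composition preserves convexity, the whole problem reduces to showing that $f(\mathbf{A})\triangleq\tr(\mathbf{A}^{-1})$ is convex on the set of positive definite matrices.

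I would establish this standard fact in the form best suited to what follows in the paper, namely via an epigraph / Schur complement representation. Introduce an auxiliary matrix $\mathbf{M}$ and note that, for $\mathbf{A}\succ\mathbf{0}$, the inequality $\mathbf{M}\succeq\mathbf{A}^{-1}$ is equivalent by Schur complement to
\begin{equation*}
	\begin{bmatrix} \mathbf{M} & \mathbf{I} \\ \mathbf{I} & \mathbf{A} \end{bmatrix} \succeq \mathbf{0},
\end{equation*}
which is a linear matrix inequality jointly in $(\mathbf{A},\mathbf{M})$ and therefore defines a convex set. The function $\tr(\mathbf{A}^{-1})$ is then the optimal value of minimizing the linear objective $\tr(\mathbf{M})$ subject to this convex constraint, and a linear function minimized over one argument of a jointly convex set yields a convex function of the remaining argument. (Equivalently one can argue directly via the operator-convexity inequality $(\lambda\mathbf{A}+(1-\lambda)\mathbf{B})^{-1}\preceq\lambda\mathbf{A}^{-1}+(1-\lambda)\mathbf{B}^{-1}$ and take traces.) Composing with the affine map $\{\Pkj\}\mapsto\Je(\mathbf{p}_k;\{\Pkj\})$ and summing over $k$ then yields convexity of the objective, completing the proof.

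The only mildly delicate point, and therefore the main obstacle to be careful about, is that $\Je$ may fail to be strictly positive definite for some choices of $\{\Pkj\}$ on the feasible boundary (e.g.\ if all transmit powers to some agent vanish, or if the $\uvec(\phi_{kj})$ are collinear); in that case $\tr(\Jeinv)$ is $+\infty$. This is harmless for convexity because we simply treat the objective as an extended-real-valued convex function with effective domain $\{\{\Pkj\}:\Je(\mathbf{p}_k;\{\Pkj\})\succ\mathbf{0}\;\forall k\}$, on which the composition argument above applies, and the value $+\infty$ outside preserves convexity. I would mention this briefly so that the SDP reformulation in the next subsection (which relies on epigraph variables staying finite at the optimum) is well-posed under the implicit assumption that the network is localizable, i.e.\ that there exists at least one feasible allocation rendering every $\Je(\mathbf{p}_k;\cdot)$ nonsingular.
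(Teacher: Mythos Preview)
Your proposal is correct and follows essentially the same line as the paper: note that the constraints are linear, observe that $\Je(\mathbf{p}_k;\{\Pkj\})$ is affine in $\{\Pkj\}$, and reduce to the convexity of $\mathbf{A}\mapsto\tr(\mathbf{A}^{-1})$ on positive definite matrices (with the singular case handled as $+\infty$). The only difference is cosmetic: the paper simply cites the convexity of $\tr(\mathbf{X}^{-1})$ from \cite{BoyVan:B04} and writes out the Jensen inequality directly, whereas you supply a self-contained Schur-complement/epigraph argument for that fact --- which has the pleasant side effect of anticipating the SDP reformulation $\PISDP$ that follows.
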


\begin{proof}
	See Appendix \ref{apd:convexity}.
\end{proof}

Since $\PI$ is a convex problem, the optimal solution can be achieved by the standard convex optimization algorithms, e.g., interior point method. We next show that such problem can be converted to a SDP problem, which is a more favorable formulation since many fast real-time optimization solvers are available for SDP \cite{VanBoy:96,LuoMaSoYeZha:10}.

To obtain an equivalent formulation to $\PI$, we replace the EFIMs in \eqref{eq:P-obj} with auxiliary matrices $\M_k$, and add another constraint
$$
\M_k \succeq \Jeinv(\mathbf{p}_k;\{\Pkj\}).
$$
Since $\Je(\mathbf{p}_k)$ is a positive semidefinite matrix, due to the property of Schur complement, the above inequality is equivalent to
$$
\begin{bmatrix}
	\M_k & \I \\ 
	\I & \Je(\mathbf{p}_k;\{\Pkj\})
\end{bmatrix} \succeq 0 \,.
$$
Then, we can obtain a SDP formulation $\PISDP$ equivalent to $\PI$,
\begin{align}
\PISDP\!\!: 
\min_{\{\Pkj\},\,\M_k} \,& \sum_{k\in\Sagt} \tr\left\{\M_k\right\} 
	\notag\\
	\text{s.t.} \quad\,& \!\begin{bmatrix}
							\M_k & \I \\ 
							\I & \Je(\mathbf{p}_k;\{\Pkj\})
					   \end{bmatrix} \succeq 0\,, \,\forall k\in\Sagt
				\notag\\
				& \eqref{eq:P-con-ttl}~\tx{--}~\eqref{eq:P-con-link}.
				\notag
\end{align}
Hence, the optimal solution of $\PI$ can be efficiently obtained by solving the SDP formulation $\PISDP$.

\subsection{Problem Formulation Based on mDPEB}


We now consider the minimization of total mDPEB as our objective. The problem can be formulated as
\begin{align}
	\PII: \quad	
	\min_{\{\Pkj\}} \quad& \sum_{k\in\Sagt} \frac{1}{\mu_{2,k}}
	\notag\\
	\text{s.t.}	\quad
				& \eqref{eq:P-con-ttl}~\tx{--}~\eqref{eq:P-con-link}
				\notag
\end{align}
which can be equivalently converted to
\begin{align}
	\PIICP\!\!: 
	\min_{\{\Pkj,r_k\}} \,& \sum_{k\in\Sagt} 
	\frac{1}{\sum_{j\in\Sanc} \xikj \, \Pkj - r_k}
	\notag\\
	\text{s.t.} \quad& r_k \!\geq\! 
				\Big\|  \sum_{j\in\Sanc} \xikj \, \Pkj \, \uvec(2\phi_{kj})
				\Big\|, \;\forall k\in\Sagt
				\label{eq:PmD-CP-con-soc}\\
				& \eqref{eq:P-con-ttl}~\tx{--}~\eqref{eq:P-con-link}.
				\notag
\end{align}
The constraints \eqref{eq:PmD-CP-con-soc} define $\Nagt$ second-order cones given by
\begin{align*}
	\mathcal{Q}_k=\{(r_k,\mathbf{z}_k)\in\mathbb{R}\times\mathbb{R}^2:
	r_k\geq\|\mathbf{z}_k\|\}, \quad \forall k\in\Sagt
\end{align*}
where $\mathbf{z}_k= \sum_{j\in\Sanc} \xikj \, \Pkj \, \uvec(2\phi_{kj})$.
Moreover, the objective is convex in $\{\Pkj,r_k\}$, since the denominator is a positive linear combination of $\{\Pkj,r_k\}$, and the reciprocal is a convex and decreasing function which preserves convexity \cite{BoyVan:B04}.
Thus, we obtain a nonlinear SOCP problem which is convex in $\Pkj$.
\begin{remark}
	We consider a general model where each anchor can use different transmit power, and our work can be applied to the anchor broadcasting scenario by simply adding constraint $\Pkj = x_j,~\forall k\in\Sagt$.
\end{remark}


\begin{remark}\label{rmk:linear}
Additional linear constraints on transmit power can be imposed depending on the realistic requirements of location-aware networks. For example, we can consider $\Pkjmin \leq \Pkj \leq \Pkjtot$ where $\Pkjmin$ and $\Pkjtot$ are the lower and upper limit of the transmit power from anchor $j$ to agent $k$, respectively; or $\sum_{k\in\Sagt} \Pkj \leq \Pjtot$ where $\Pjtot$ is the upper limit of the total transmit power from anchor $j$. Due to the linearity of these constraints, the convexity of the problem is retained, and the optimal solution can be obtained via conic programming.
\end{remark}


\begin{remark}\label{rmk:async-opt}
	For the asynchronous networks where round-trip TOF is employed for distance estimation, we need to allocate the transmit power for both anchors and agents. Let $\Pkj'$ denote the power of the transmit waveform from agent $k$ to anchor $j$. In addition to the total anchor power constraint in \eqref{eq:P-con-ttl}, we also impose a total power constraint on agents, i.e.,
\begin{align}\label{eq:Pp-con-ttl}
	\sum_{k\in\Sagt} \sum_{j\in\Sanc} \Pkj' \leq \Pptot
\end{align}
where
\begin{align}\label{eq:Pp-con-link}
	\Pkj' \geq 0, \quad \forall k\in\Sagt, ~\forall j\in\Sanc.
\end{align}
It can be shown that the EFIM of agent $k$ is given by
\begin{align*}
	\Je(\mathbf{p}_k;\{\Pkj\}) = \sum_{j\in\Sanc} \xikj \, g(\Pkj,\Pkj') \, \Jr(\phi_{kj})
\end{align*}
where the equivalent power $g(\Pkj,\Pkj') = 4 \big(\Pkj^{-1}+\Pkj'^{\,-1}\big)^{-1}$. 
To derive the maximum total equivalent power, we consider the following problem
\begin{align*}
	\max_{\{\Pkj,\Pkj'\}} ~& \sum_{k\in\Sagt} \sum_{j\in\Sanc} g(\Pkj,\Pkj') \\
	\text{s.t.} \quad
	& \eqref{eq:P-con-ttl}~\tx{--}~\eqref{eq:P-con-link} \\
	& \eqref{eq:Pp-con-ttl}~\tx{--}~\eqref{eq:Pp-con-link}.
\end{align*}
Using the Karush-Kuhn-Tucker conditions \cite{BerNedOzd:B03}, it can be proved that the optimal value is reached as a constant $g(\Ptot,\Pptot)$ if and only if 
\begin{align}
	\Pkj' = \frac{\Pptot}{\Ptot} \, \Pkj.
\end{align}

Hence, in order to achieve the maximum total equivalent power, the power allocated on anchors and agents should be proportional and consequently, the EFIM for asynchronous network is
\begin{align*}
	\Je(\mathbf{p}_k;\{\Pkj\}) = \sum_{j\in\Sanc} \xikj \, \frac{4 \Pptot}{\Pptot+\Ptot} \, \Pkj \, \Jr(\phi_{kj})
\end{align*}
which is with the same structure as the EFIM of synchronous network in \eqref{eq:EFIM}. Therefore, the power allocation on both anchors and agents in asynchronous networks can be equivalently converted into anchor power allocation in synchronous networks.
\end{remark}

\subsection{Formulations with QoS Guarantee}\label{sec:optimal-QoS}

We next briefly show that the proposed framework also applies to another two types of problem formulations based on different QoS requirements.

\subsubsection{Energy-efficient Formulation}
The objective is to minimize the total transmit power subject to the requirements for agents' SPEBs, i.e.,
\begin{align}
	\min_{\{\Pkj\}} \quad& \sum_{k\in\Sagt} \sum_{j\in\Sanc} \Pkj
	\notag\\
	\text{s.t.} \quad& \tr\left\{\Jeinv(\mathbf{p}_k;\{\Pkj\})\right\}
	\leq \gamma_k, \quad \forall k\in\Sagt \label{eq:con-speb}
	\\
	& \eqref{eq:P-con-link}.
	\notag
\end{align}
Similarly, a formulation for the mDPEB case can be obtained by replacing \eqref{eq:con-speb} with
\begin{align}\label{eq:con-mdpeb}
	\frac{1}{\mu_{2,k}}
	\leq \gamma_k\,, \quad \forall k\in\Sagt.
\end{align}

\subsubsection{Min-max SPEB Formulation}
The objective is to minimize the maximum SPEB among all the agents, i.e.,
\begin{align*}
	\min_{\{\Pkj\}} \quad& \max_{k}\Big\{\tr\left\{\Jeinv(\mathbf{p}_k;\{\Pkj\})\right\}\!\Big\}
	\\
	\text{s.t.} \quad& \eqref{eq:P-con-ttl}~\tx{--}~\eqref{eq:P-con-link}.
\end{align*}
It can be equivalently transformed into
\begin{align*}
	\min_{\{\Pkj\},\,\gamma} \quad& \gamma
	\notag\\
	\text{s.t.} \quad& \tr\left\{\Jeinv(\mathbf{p}_k;\{\Pkj\})\right\}
	\leq \gamma\,, \quad \forall k\in\Sagt
	\\
	& \eqref{eq:P-con-ttl}~\tx{--}~\eqref{eq:P-con-link}
\end{align*}
which turns out to be with the same structure as the energy-efficient formulation. Similarly, a min-max formulation for the mDPEB case can be obtained by replacing the SPEB with the mDPEB in the constraint.

Note that since the above formulations with QoS guarantee have the same structure as $\PI$ or $\PII$, which can be solved efficiently by conic programing, we will focus on $\PI$ and $\PII$ in the following.

To obtain the optimal solutions of $\PI$ and $\PII$, it requires the network parameters, i.e., the channel parameter $\xikj$ and the angle $\phi_{kj}$. However, $\xikj$'s and $\phi_{kj}$'s are usually not perfectly known in realistic networks, and only estimated values are available. 
When $\xikj$'s and $\phi_{kj}$'s are subject to uncertainty, the formulation $\PI$ or $\PII$ may fail to provide reliable solutions, since the actual SPEB/mDEPB is not necessarily minimized. Therefore, it is essential to design a power allocation scheme which is robust to the uncertainty in network parameters.

\section{Robust Power Allocation under \\Imperfect Knowledge of Network Parameters}\label{sec:robust}

In this section, we consider the location-aware networks with imperfect knowledge of network parameters, and propose robust optimization methods to minimize the worst-case SPEB/mDPEB under parameter uncertainty.

\subsection{Robust Counterpart of SPEB Minimization}

In realistic location-aware networks, the network parameters, i.e., $\xikj$ and $\phi_{kj}$, can be obtained through channel estimation or inferred based on the prior information of agents' positions,\footnote{The prior position information is available in applications such as navigation.} and hence are both subject to uncertainty. We adopt robust optimization methodology, which is developed in recent years to handle the optimization problems with data uncertainty \cite{BenGhaNem:B09}. Typically, the data defining the optimization problem is assumed to lie in a certain bounded set, referred to as \emph{uncertainty set}.
Here we consider the actual channel parameters and angles lie in linear uncertainty sets, i.e.,\footnote{We consider the parameter $\xikj$ related to the channel properties to be always positive, i.e., $\xiest_{kj} - \varepsilon^\xi_{kj} > 0$.}
\begin{align*}
	& \xikj \in \mathcal{S}^\xi_{kj} \triangleq [\, \xiest_{kj} - \varepsilon^\xi_{kj}\,, \,\xiest_{kj} + \varepsilon^\xi_{kj} \,] \\ 
	& \phi_{kj} \in \mathcal{S}^\phi_{kj} \triangleq [\,  \phiest_{kj} - \varepsilon^{\phi}_{kj}\,,\, \phiest_{kj} + \varepsilon^{\phi}_{kj} \,]
\end{align*}
where $\xiest_{kj}$ and $\phiest_{kj}$ denote channel parameter and angle with uncertainty, respectively, and $\varepsilon^\xi_{kj}$ and $\varepsilon^{\phi}_{kj}$ are both small positive numbers denoting the maximum uncertainty in the channel parameter and angle, respectively.\footnote{If uncertainty exists in anchor positions, it can be equivalently converted into the uncertainty in channel qualities \cite{SheWymWin:J10}.}

To deal with the network parameter uncertainty, we adopt robust optimization techniques to guarantee the worst-case performance. Instead of using the estimated values, we consider minimizing the largest SPEB over the possible set of actual network parameters, i.e.,
\begin{align*}
	\PR: \quad	
	\min_{\{\Pkj\}} \,& \max_{\{\xikj\in\mathcal{S}^\xi_{kj},\,\phi_{kj}\in\mathcal{S}^{\phi}_{kj}\}} \, \sum_{k\in\Sagt}\tr\left\{\Jeinv(\mathbf{p}_k;\{\Pkj\})\right\}
	\\
	\text{s.t.} \quad
				& \eqref{eq:P-con-ttl}~\tx{--}~\eqref{eq:P-con-link}.
\end{align*}

Since $\tr\left\{\Jeinv(\mathbf{p}_k;\{\Pkj\})\right\}$ is a monotonically non-increasing function of $\xikj$, the maximum SPEB over $\xikj$ is independent of $\phi_{kj}$. Hence, the maximization over $\xikj$ simply follows that
\begin{align*}
	\xiwstkj\triangleq 
	\arg\max_{\{\xikj\in\mathcal{S}^\xi_{kj}\}} ~ \tr\left\{\Jeinv(\mathbf{p}_k;\{\Pkj\})\right\} = 
	\xiest_{kj} - \varepsilon^\xi_{kj}.
\end{align*}
On the other hand, however, the maximization over $\phi_{kj}$ is not trivial, because 
\begin{align}\label{eq:maxphi}
	\{\phiwst_{kj}\} & \triangleq  
	 \arg\max_{\{\phi_{kj}\in\mathcal{S}^{\phi}_{kj}\}} ~ \tr\left\{\Jeinv(\mathbf{p}_k;\{\Pkj\})\right\} \notag\\
	& = 
	\arg\max_{\{\phi_{kj}\in\mathcal{S}^{\phi}_{kj}\}} ~ \Big\|
	\sum_{j\in\Sanc} \xikj \, \Pkj \, \uvec(2\phi_{kj})
	\Big\|^2 
\end{align}
and the right-hand side of \eqref{eq:maxphi} is not a convex problem. Hence, it is difficult to obtain a close-form solution of $\{\phiwst_{kj}\}$ since it depends on $\{\Pkj\}$. 

We next consider a relaxation for the robust optimization with respect to $\{\phi_{kj}\}$ and introduce a new matrix
\begin{equation}\label{eq:matQ}
	\Q(\phiest_{kj},\delta_{kj}) = \Jr(\phiest_{kj}) - \delta_{kj}\cdot\I
\end{equation}
to replace $\Jr(\phi_{kj})$ in the SPEB in \eqref{eq:SPEB}. We will show that the worst-case SPEB over $\phi_{kj}$ can be bounded above by the new function for sufficiently large $\delta_{kj}$. The details are given in the following proposition.

\begin{proposition}\label{thm:Qmatrix}
	If $\sum_{j\in\Sanc} \xikj \, \Pkj \, \Q(\phiest_{kj},\delta_{kj}) \succeq 0$ and $\delta_{kj} \geq \sin\varepsilon^{\phi}_{kj}$, the maximum SPEB over the actual angle $\phi_{kj}$ is always upper bounded as
	\begin{align}\label{eq:wstbnd}
		&\max_{\{\phi_{kj}\in\mathcal{S}^{\phi}_{kj}\}}
		\tr\big\{\Jeinv(\mathbf{p}_k;\{\Pkj\})\big\} \notag\\
		&\hspace{4em} \leq ~
		\text{\tr}\bigg\{\Big(\sum_{j\in\Sanc} \xikj \, \Pkj \, \Q(\phiest_{kj},\delta_{kj})\Big)^{-1}\bigg\} \,.
	\end{align}
	Moreover, the tightest upper bound in \eqref{eq:wstbnd} is attained by
	\begin{align*}
		\sin\varepsilon^{\phi}_{kj} = \arg\min_{\delta_{kj}} \text{\tr}\bigg\{\Big(\sum_{j\in\Sanc} \xikj \, \Pkj \, \Q(\phiest_{kj},\delta_{kj})\Big)^{-1}\bigg\}.
	\end{align*}
\end{proposition}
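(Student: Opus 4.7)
The plan is to derive the upper bound by first establishing, for each anchor link, a uniform semidefinite lower bound $\Jr(\phi_{kj})\succeq\Q(\phiest_{kj},\delta_{kj})$ valid for every $\phi_{kj}\in\mathcal{S}^{\phi}_{kj}$, and then pushing this bound through the linear combination in \eqref{eq:EFIM}, the matrix inverse, and the trace. The tightness claim will then follow from a direct monotonicity argument in $\delta_{kj}$.

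The technical core is a spectral analysis of the $2\times 2$ symmetric matrix
$$
\Jr(\phi_{kj})-\Jr(\phiest_{kj})=\uvec(\phi_{kj})\uvec(\phi_{kj})\T-\uvec(\phiest_{kj})\uvec(\phiest_{kj})\T.
$$
Each outer product has trace one, so this difference is traceless and its eigenvalues are $\pm\lambda$, meaning its spectral norm equals $\lambda$. Using $\cos^2\phi=(1+\cos 2\phi)/2$ and $\cos\phi\sin\phi=\sin(2\phi)/2$ to expand the entries, the matrix reduces to $\tfrac{1}{2}\bigl[\begin{smallmatrix} a & b \\ b & -a\end{smallmatrix}\bigr]$ with $a=\cos 2\phi_{kj}-\cos 2\phiest_{kj}$ and $b=\sin 2\phi_{kj}-\sin 2\phiest_{kj}$. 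The cosine-difference identity then gives $a^2+b^2=4\sin^2(\phi_{kj}-\phiest_{kj})$, so $\lambda=|\sin(\phi_{kj}-\phiest_{kj})|\leq\sin\varepsilon^{\phi}_{kj}$ on $\mathcal{S}^{\phi}_{kj}$ (assuming $\varepsilon^{\phi}_{kj}<\pi/2$). This is equivalent to $\Jr(\phi_{kj})-\Jr(\phiest_{kj})\succeq -\sin\varepsilon^{\phi}_{kj}\,\I\succeq -\delta_{kj}\I$ under the hypothesis $\delta_{kj}\geq\sin\varepsilon^{\phi}_{kj}$, i.e., $\Jr(\phi_{kj})\succeq\Q(\phiest_{kj},\delta_{kj})$.

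Because $\xikj\Pkj\geq 0$, scaling by these weights and summing over $j\in\Sanc$ preserves the Loewner order, so $\Je(\mathbf{p}_k;\{\Pkj\})\succeq\sum_{j\in\Sanc}\xikj\Pkj\,\Q(\phiest_{kj},\delta_{kj})$. The proposition's hypothesis makes the right-hand side positive (and we may assume it invertible, else the bound is trivially $+\infty$); operator monotonicity of the inverse on the PD cone reverses the order, and Loewner-monotonicity of the trace then delivers \eqref{eq:wstbnd}. Since the resulting bound is independent of the actual $\{\phi_{kj}\}$, it also bounds the maximum over $\mathcal{S}^{\phi}_{kj}$. For the tightness assertion, each $\delta_{kj}$ enters the summed matrix only through the single term $-\xikj\Pkj\,\delta_{kj}\I$; hence the sum is non-increasing in each $\delta_{kj}$ in the Loewner order, its inverse is non-decreasing, and so is the trace. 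The minimum of the right-hand side of \eqref{eq:wstbnd} over the feasible region $\{\delta_{kj}\geq\sin\varepsilon^{\phi}_{kj}\}$ is therefore attained at equality, $\delta_{kj}=\sin\varepsilon^{\phi}_{kj}$. The main obstacle is the spectral calculation; once the identity $\|\Jr(\phi_{kj})-\Jr(\phiest_{kj})\|=|\sin(\phi_{kj}-\phiest_{kj})|$ is pinned down, the remainder of the argument is routine matrix monotonicity.
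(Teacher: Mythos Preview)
Your proof is correct and follows essentially the same route as the paper: establish $\Jr(\phi_{kj})\succeq\Q(\phiest_{kj},\delta_{kj})$ termwise via a $2\times 2$ computation, then push the Loewner inequality through the nonnegative sum, inverse, and trace, and finish with the monotonicity of the bound in $\delta_{kj}$. Your spectral-norm packaging (traceless difference with eigenvalues $\pm|\sin(\phi_{kj}-\phiest_{kj})|$) is a slightly cleaner way to reach the same key inequality the paper obtains by writing out the entries of $\Jr(\phi_{kj})-\Q(\phiest_{kj},\delta_{kj})$ and checking positive semidefiniteness directly.
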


\begin{proof}
	See Appendix \ref{apd:Qmatrix}.
\end{proof}

In the rest of the paper, we take the minimizer $\delta_{kj} = \sin\varepsilon^{\phi}_{kj}$ and denote the matrix 
\begin{align*}
	\Q(\phiest_{kj}) =  \Jr(\phiest_{kj}) - \sin\varepsilon^{\phi}_{kj}\cdot\I
\end{align*}
by omitting the variable $\delta_{kj}$ in \eqref{eq:matQ} for simplicity. Then, we replace the matrix $\Jr(\phi_{kj})$ with
$\Q(\phiest_{kj})$ in the previous formulation, and propose a robust counterpart of $\PI$ given by
\begin{align}
\PRI: \quad	
\min_{\{\Pkj\}} \quad& \sum_{k\in\Sagt} \tr\bigg\{\Big(\sum_{j\in\Sanc}  \xiwstkj \, \Pkj \, \Q(\phiest_{kj})\Big)^{-1}\bigg\}  \notag
	\\
	\text{s.t.} \quad
				& \sum_{j\in\Sanc} \xiwstkj \, \Pkj \, \Q(\phiest_{kj}) \succeq 0, \;\forall\, k\in\Sagt
				\label{eq:PRI-con-anc}
				\\
				& \eqref{eq:P-con-ttl}~\tx{--}~\eqref{eq:P-con-link}. \notag
\end{align}
Again by the property of Schur complement as in $\PISDP$, the problem $\PRI$ is equivalent to a SDP formulation, given by
\begin{align}
\PRISDP\!\!: 
\min_{\{\Pkj\},\,\M_k} \,& \sum_{k\in\Sagt} \tr\left\{\M_k\right\} 
	\notag\\
	\text{s.t.} \quad\,& \!\begin{bmatrix}
							\M_k & \I \\ 
							\I & \!\!\sum_{j\in\Sanc}  \xiwstkj \, \Pkj \, \Q(\phiest_{kj})
					   \end{bmatrix} \!\succeq\! 0, \,\forall k\in\Sagt
				\label{eq:PR-SDP-con-anc-psd}
				\\
				& \eqref{eq:P-con-ttl}~\tx{--}~\eqref{eq:P-con-link}.
				\notag
\end{align}

\begin{remark}
	The formulation with QoS guarantee proposed in Section \ref{sec:optimal-QoS} can also be extended to its robust formulation using the above method. By such, the SPEB of each agent is always guaranteed to satisfy its position error requirement. However, if using the non-robust formulation, the requirements for agents' SPEBs, e.g., \eqref{eq:con-speb} or \eqref{eq:con-mdpeb}, can easily be violated due to imperfect knowledge of network parameters.
\end{remark}

Note that from Proposition \ref{thm:Qmatrix}, the new formulation $\PRI$ is a valid relaxation for $\PR$ when the condition (\ref{eq:PRI-con-anc}) holds. Since $\Q(\phiest_{kj})$ is not positive definite due to $\det\big(\Q(\phiest_{kj})\big) = \sin\varepsilon^{\phi}_{kj} (\sin\varepsilon^{\phi}_{kj} - 1 ) \leq 0$, such a condition does not necessarily hold for all power allocation $\{\Pkj\}$. However, we will show that it holds for the optimal power allocation of $\PR$ with high probability (w.h.p.) when the number of anchors is large or the uncertainty in angle is small.

	Before giving the proposition, we introduce an equivalent expression for the channel parameter $\xikj$ in \eqref{eq:ch-para} as $\xikj = \zeta_{kj}/d_{kj}^{\,2\beta}$, where $\zeta_{kj}$ is a positive coefficient characterizing shadowing effect and small-scale fading process, and $\beta$ is the amplitude loss exponent.\footnote{We introduce the path loss model here to facilitate the proof of the Proposition \ref{thm:prbound}. However, the robust power allocation schemes do not require $\beta$, since the channel parameter $\xikj$ can be obtained directly through channel estimation.}

\begin{proposition}\label{thm:prbound}
Consider a network where all the nodes are uniformly located in a $R \times R$ square region, the minimum distance between two nodes is $r_0$, and the coefficient $\zeta_{kj}$ has a support on $[\zeta_{\min}~\zeta_{\max}]$ where $0 < \zeta_{\min} \leq \zeta_{\max}$. Let $\{\Pkj^*\}$ be the optimal solution of $\PR$, and $\delta = \sin\varepsilon^{\phi}$ where $\varepsilon^{\phi} = \max\{\varepsilon^{\phi}_{kj}\}$, then
	\begin{enumerate}[(a)]
		\item when $\Nanc\rightarrow\infty$ and $\delta\leq\delta_{\max}$, where $\delta_{\max}$ is the smallest positive root of equation
$4 \delta^4 - 4 \delta^2 - 2 {\zeta_{\max}}/{\zeta_{\min}} \delta + 1 = 0$,
we have
			\begin{align*}
				\Pr\bigg\{\sum_{j\in\Sanc}  
				\xiwstkj \, \Pkj^* \, \Q(\phiest_{kj}) \succeq 0 \bigg\} 
				= 1 - \mathcal{O}\big(\exp(-\eta\!\cdot\!\Nanc)\big), 
\\ \quad \forall k\in\Sagt
			\end{align*}
			where $\eta$ is a fixed positive number;
		\item when $\varepsilon^{\phi}\rightarrow 0$, we have
			\begin{align*}
				\Pr\bigg\{\sum_{j\in\Sanc}  
				\xiwstkj \, \Pkj^* \, \Q(\phiest_{kj}) \succeq 0\bigg\} 
					= 1 - \mathcal{O}\big((\varepsilon^\phi)^{\Nanc/2}\big)\,, 
\\ \quad \forall k\in\Sagt.
			\end{align*}
	\end{enumerate}
\end{proposition}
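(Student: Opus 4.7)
The plan is to recast the positive-semidefiniteness condition as an inequality on a single ``angular ratio,'' and then use concentration and measure-theoretic arguments tailored to the two regimes $\Nanc\to\infty$ and $\varepsilon^\phi\to 0$.

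First, I would rewrite the condition explicitly. Writing $\Q(\phiest_{kj}) = \uvec(\phiest_{kj})\uvec(\phiest_{kj})\T - \sin\varepsilon^{\phi}_{kj}\,\I$ and invoking the two-dimensional spectral decomposition from Section~\ref{sec:mDPEB}, set $a_j = \xiwstkj\,\Pkj^*$, $T_k = \sum_{j\in\Sanc} a_j$, and $\mathbf{v}_k = \sum_{j\in\Sanc} a_j\,\uvec(2\phiest_{kj})$. A short calculation shows that $\sum_{j\in\Sanc} a_j\,\Q(\phiest_{kj}) \succeq 0$ is equivalent to
\begin{align*}
\|\mathbf{v}_k\| \leq (1-2\delta_k)\,T_k, \qquad
\delta_k := \frac{\sum_{j\in\Sanc} a_j \sin\varepsilon^{\phi}_{kj}}{T_k} \leq \delta.
\end{align*}
Setting the ``angular ratio'' $\rho_k := \|\mathbf{v}_k\|/T_k$, the whole question reduces to lower-bounding $\Pr\{\rho_k \leq 1-2\delta\}$.

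For part (a), I would exploit concentration of the weighted angular sum. Writing $\rho_k = \max_{\theta\in[0,2\pi)} T_k^{-1}\sum_{j} a_j \cos(2\phiest_{kj}-\theta)$ and using a finite $\epsilon$-net over $\theta$ together with Hoeffding's inequality applied to the bounded terms $\cos(2\phiest_{kj}-\theta)$, the random sum concentrates around a deterministic expectation whose worst-case deviation is controlled by the channel-weight ratio $\zeta_{\max}/\zeta_{\min}$. The polynomial equation $4\delta^4-4\delta^2-2(\zeta_{\max}/\zeta_{\min})\delta+1=0$ arises from matching this limiting deviation to the margin $1-2\delta$; its smallest positive root $\delta_{\max}$ is precisely the value beyond which the Hoeffding bound can no longer beat $1-2\delta$. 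For $\delta\leq\delta_{\max}$, a standard Hoeffding-tail estimate yields $\Pr\{\rho_k > 1-2\delta\} = \mathcal{O}(\exp(-\eta\,\Nanc))$ for some $\eta>0$.

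For part (b), I would characterize the failure event geometrically via the identity
\begin{align*}
T_k^2 - \|\mathbf{v}_k\|^2 = 2\sum_{i,j\in\Sanc} a_i a_j\,\sin^2(\phiest_{ki}-\phiest_{kj}).
\end{align*}
The event $\rho_k > 1-2\delta$ is equivalent to $\sum_{i,j} p_i p_j \sin^2(\phiest_{ki}-\phiest_{kj}) < 2\delta-2\delta^2$, with $p_j = a_j/T_k$, which forces the angles $\{\phiest_{kj}\}$ weighted by $p_j^*$ to lie within a common arc of angular radius $\mathcal{O}(\sqrt{\delta})$ about some reference direction $\phi_0$. Because anchors are drawn uniformly in the $R\times R$ square and the minimum separation $r_0$ bounds the density of the induced angle distribution away from $\infty$, each angle independently falls in an arc of width $\mathcal{O}(\sqrt{\varepsilon^{\phi}})$ with probability $\mathcal{O}(\sqrt{\varepsilon^{\phi}})$; taking the product over the $\Nanc$ anchors gives the advertised $\mathcal{O}((\varepsilon^{\phi})^{\Nanc/2})$ scaling.

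The principal obstacle is that $\{\Pkj^*\}$ is itself a random function of the angles, so standard concentration results cannot be invoked directly on $\mathbf{v}_k$. My workaround is to decouple weights from angles: I would prove the required bound on $\rho_k$ uniformly over all non-negative weights on the simplex (discarding degenerate weights that would make the objective of $\PR$ infinite by virtue of optimality of $\{\Pkj^*\}$), thereby reducing the probability estimate to a purely geometric statement about the random angle multiset $\{\phiest_{kj}\}_{j\in\Sanc}$, with the bounded ratio $\zeta_{\max}/\zeta_{\min}$ absorbing the distortion induced by the channel coefficients.
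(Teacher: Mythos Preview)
Your reformulation of the semidefiniteness condition as $\rho_k \leq 1-2\delta_k$ is correct and matches the paper's eigenvalue expression \eqref{eq:mu-robust}. The gap lies in how you propose to control $\rho_k$.

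For part (a), the Hoeffding route does not go through. The weights $a_j = \xiwstkj\,\Pkj^*$ are neither independent of the angles nor spread out; the optimal allocation $\{\Pkj^*\}$ may well concentrate on a handful of anchors, so the effective sample size in a Hoeffding bound is $\mathcal{O}(1)$, not $\Nanc$. Your proposed fix, taking a supremum over all weights on the simplex subject only to ``non-degeneracy,'' fails outright: placing all mass on a single anchor gives $\rho_k = 1 > 1-2\delta$ regardless of the angle configuration, and merely excluding allocations with infinite SPEB is far too weak to rule this out. More tellingly, the polynomial $4\delta^4-4\delta^2-2(\zeta_{\max}/\zeta_{\min})\delta+1=0$ does not arise from any Hoeffding-type deviation. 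It comes from the trigonometric inequality $2\sin\varepsilon^\phi \leq (\zeta_{\min}/\zeta_{\max})\cos^2(2\varepsilon^\phi) = (\zeta_{\min}/\zeta_{\max})(1-2\delta^2)^2$, which is specific to a two-anchor, nearly orthogonal configuration.

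The idea you are missing is the one the paper uses: exploit optimality \emph{quantitatively}, not just to exclude degenerate weights. Since $\{\Pkj^*\}$ minimizes the worst-case SPEB, the optimal worst-case SPEB is bounded above by that of \emph{any} feasible allocation. The paper picks a particular suboptimal allocation---equal power on two anchors $i,i'$ that are close to the agent and nearly orthogonal in angle---and computes its worst-case SPEB explicitly. This upper-bounds the trace of the inverse of $\sum_j \xiwstkj\,\Pkj^*\,\Jr(\phiest_{kj})$, which in turn (via the crude bound $\sum_j \xiwstkj\,\Pkj^*\,\sin\varepsilon^\phi_{kj} \leq (\zeta_{\max}/r_0^{2\beta})\,\Ptot\,\delta$) forces the smaller eigenvalue of $\sum_j \xiwstkj\,\Pkj^*\,\Q(\phiest_{kj})$ to be nonnegative. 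The exponential rate in $\Nanc$ then comes from the probability that such a well-placed pair $(i,i')$ exists among $\Nanc$ uniform anchors, which is $1-\mathcal{O}(e^{-\eta\Nanc})$ by a direct occupancy calculation.

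For part (b), your geometric intuition---failure forces angles into an arc of width $\mathcal{O}(\sqrt{\varepsilon^\phi})$---matches the paper's. But the same weight-dependence issue bites: your identity $T_k^2-\|\mathbf{v}_k\|^2 = 2\sum_{i,j} a_i a_j\sin^2(\phiest_{ki}-\phiest_{kj})$ being small does not force \emph{all} anchors into a small arc, only those receiving nonzero weight. The paper again sidesteps this by the optimality comparison: if \emph{any} two anchors have angular separation in $[\sqrt{2a\varepsilon^\phi},\,\pi-\sqrt{2a\varepsilon^\phi}]$, the equal-split allocation on that pair already certifies the condition for $\{\Pkj^*\}$. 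The failure probability is then at most the probability that all $\Nanc$ anchors lie in a common wedge of angular width $\mathcal{O}(\sqrt{\varepsilon^\phi})$, which is $\mathcal{O}((\varepsilon^\phi)^{\Nanc/2})$.
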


\begin{proof}
	See Appendix \ref{apd:prbound}.
\end{proof}

\begin{remark}
	Proposition \ref{thm:prbound} implies that the condition (\ref{eq:PRI-con-anc}) holds w.h.p. at the rate indicated by the $\mathcal{O}$ notation, where $\mathcal{O}(f(n))$ means that the function value is on the order of $f(n)$ \cite{CLRS:01}.
\end{remark}

\begin{remark}
	Note that Proposition \ref{thm:prbound} holds for $\{\Pkj^*\}$, which implies that the optimal solution of the original robust formulation $\PR$ is included in the feasible set of the proposed formulation $\PRI$ (or $\PRISDP$) w.h.p.
\end{remark}

\subsection{Robust Counterpart of mDPEB Minimization}
We investigate the robust power allocation based on mDPEB formulation $\PII$. To circumvent the intractable maximization in \eqref{eq:maxphi}, we consider the robust SPEB formulation $\PRI$. Specifically, the objective of $\PRI$ can be written as
\begin{align}\label{eq:SPEB-robust}
	\tr\bigg\{\Big(\sum_{j\in\Sanc}  \xiwstkj \, \Pkj \, \Q(\phiest_{kj})\Big)^{-1}\bigg\} 
	= \frac{1}{\muwst_{1,k}} + \frac{1}{\muwst_{2,k}}
\end{align}
where $\muwst_{1,k}$ and $\muwst_{2,k}$ are the two eigenvalues of the matrix
$\sum_{j\in\Sanc}  \xiwstkj \, \Pkj \, \Q(\phiest_{kj})$, 
given by
\begin{align}\label{eq:mu-robust}
	\muwst_{1,k}, ~\muwst_{2,k}
	=\frac{1}{2}\bigg( &\sum_{j\in\Sanc} \xiwstkj \, \Pkj (1 - 2\sin\varepsilon^{\phi}_{kj})
		\notag\\
		&\pm \Big\|
			\sum_{j\in\Sanc} \xiwstkj \, \Pkj \,\uvec(2\hat{\phi}_{kj}) \Big\| \bigg).
\end{align}
Geometrically, $\muwst_{1,k}$ and $\muwst_{2,k}$ are similar to the DPEB's in two orthogonal directions. 
Using Proposition \ref{thm:prbound}, we can show that $\muwst_{2,k}\geq 0$ w.h.p. when $\Nanc$ is large or $\varepsilon^{\phi}$ is small. Since $\muwst_{1,k}\geq\muwst_{2,k}$, the smaller eigenvalue $\muwst_{2,k}$ dominates the function in \eqref{eq:SPEB-robust}. Hence, we formulate a robust counterpart of $\PII$ based on $\muwst_{2,k}$, given by
\begin{align}
	\PRII: \quad	
	\min_{\{\Pkj\}} \quad& \sum_{k\in\Sagt} \frac{1}{\muwst_{2,k}}
	\notag\\
	\text{s.t.}	\quad
				& \muwst_{2,k} \geq 0, \quad \forall k\in\Sagt
				\label{eq:PRII-con-anc}
				\\
				& \eqref{eq:P-con-ttl}~\tx{--}~\eqref{eq:P-con-link}.
				\notag
\end{align}
Given that $\muwst_{2,k}\geq 0$, the problem $\PRII$ is equivalent to the following SOCP problem:
\begin{align}
	\PRIICP\!\!: 
	\min_{\{\Pkj,r_k\}} ~& \sum_{k\in\Sagt} 
	\frac{1}{\sum_{j\in\Sanc} \xiwstkj \, \Pkj \big(1 - 2\sin\varepsilon^{\phi}_{kj}\big) - r_k}
	\label{eq:PmDR-obj}\\
	\text{s.t.} \quad& r_k \geq 
				\Big\|\sum_{j\in\Sanc} \xiwstkj \, \Pkj \, \uvec(2\phiest_{kj})
				\Big\|, \quad \forall k\in\Sagt
				\label{eq:PmDR-con-soc}\\
				& r_k \leq 
\sum_{j\in\Sanc} \xiwstkj \, \Pkj 
\big(1 \!-\! 2\sin\varepsilon^{\phi}_{kj}\big), ~\forall k\in\Sagt
				\notag\\
				& \eqref{eq:P-con-ttl}~\tx{--}~\eqref{eq:P-con-link}.
				\notag
\end{align}
Note that the uncertainty in angle $\varepsilon^{\phi}_{kj}$ only exists in the objective, and does not affect the second-order conic constraint \eqref{eq:PmDR-con-soc}. Hence, the problem $\PRIICP$ retains the same structure of $\PIICP$, and its optimal solution can be efficiently obtained.


\section{Efficient Robust Algorithm Using \\Distributed Computations}\label{sec:distributed}
In this section, we designed a distributed robust algorithm for both SPEB and mDPEB minimization, which decomposes the original formulation into two-stage optimization problems and enables parallel computations among all the agents. The proposed algorithms achieve the global optimal solution with improved computational efficiency.

\subsection{Algorithm for SPEB Minimization}
Despite the convexity of the robust SDP formulation $\PRISDP$, there are multiple positive semidefinite constraints imposed for multiple agents, and the computational complexity depends on the number of SDP constraints. 
To efficiently obtain the power allocation decision for multi-agent networks, we design a distributed implementation for $\PRISDP$, which can be solved using parallel computations among the agents.

Specifically, we let $\Pkj = \rhokj\Pk$ where $\Pk$ is the total power assigned for locating agent $k$, and $\rhokj\in[0,1]$ is a fractional number denoting the percentage of $\Pk$ allocated to anchor $j$. By introducing the two variables $\rhokj$ and $\Pk$, the robust formulation for power allocation can be written as
\begin{align}
\min_{\{\rhokj,\Pk\}} ~& \sum_{k\in\Sagt} \frac{1}{\Pk} \tr\bigg\{\Big(\sum_{j\in\Sanc}  \xiwstkj \, \rhokj \, \Q(\phiest_{kj})\Big)^{-1}\bigg\} 
	\notag\\
	\text{s.t.} \quad~
			& \sum_{j\in\Sanc} \rhokj \leq 1 \label{eq:PR-I-con-ttl}
			\\
			& \rhokj \geq 0, \quad \forall k\in\Sagt, ~\forall j\in\Sanc
			\label{eq:PR-I-con-anc} \\
			& \sum_{k\in\Sagt} \Pk \leq \Ptot
			\label{eq:PR-I-con-pow} \\
			& \Pk \geq 0, \quad \forall k\in\Sagt.
			\label{eq:PR-I-con-agt}
\end{align}
Since the constraints on $\rhokj$ and $\Pk$ are separable, and $\Pk$ and $\rhokj$ are only related to the SPEB of agent $k$, we can decompose the above problem into two stages. In Stage I, given the total power budget $\Pk$ for agent $k$, we consider the optimal allocation of $\Pk$ among all the anchors, i.e.,
\begin{align*}
\PRIDI: \quad	
\min_{\{\rhokj\},\,\M_k} ~& \tr\left\{\M_k\right\} / \Pk
	\notag\\
	\text{s.t.} \quad~& \begin{bmatrix}
							\M_k & \I \\ 
							\I & \sum_{j\in\Sanc}  \xiwstkj \, \rhokj \, \Q(\phiest_{kj})
					   \end{bmatrix} \succeq 0 
				\notag\\
				& \eqref{eq:PR-I-con-ttl}~\tx{--}~\eqref{eq:PR-I-con-anc}.
\end{align*}
The optimal solution of $\PRIDI$ is denoted by $\rhokj^*$, and it is independent of the total power for agent $k$ since $\Pk$ only appears as a scaler in the objective and can be removed. Since the problem $\PRIDI$ is formulated for agent $k$, there are totally $\Nagt$ problems to be solved in Stage I.

In Stage II, we allocate the total $\Pk$ for localizing agent $k$. The objective is the total SPEB of the agents, where the parameter $\rhokj^*$'s are from Stage I $\PRIDI$. In particular, we let $T_k = \tr\big\{\big(\sum_{j\in\Sanc} \xiwstkj \, \rhokj^* \, \Q(\phiest_{kj})\big)^{-1}\big\}$ and formulate the problem as:
\begin{align*}
\PRIDII: \quad	
\min_{\{\Pk\}} \quad& \sum_{k\in\Sagt} \frac{T_k}{\Pk} 
	\notag\\
	\text{s.t.} \quad 
	& \eqref{eq:PR-I-con-pow}~\tx{--}~\eqref{eq:PR-I-con-agt}. 
\end{align*}
The problem $\PRIDII$ is convex in $\Pk$, and the optimal solution is given in a closed form as follows.
\begin{proposition}\label{thm:Pk-close}
	Given that $\rhokj^*$ is the optimal solution of $\PRIDI$, the optimal solution of $\PRIDII$ is given by
	\begin{align}\label{eq:Pk-close}
		\Pk^* = \frac{\Ptot\sqrt{T_k}} 
					  {\sum_{k\in\Sagt} \sqrt{T_k}} \,.
	\end{align}
\end{proposition}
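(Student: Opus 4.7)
The plan is to exploit the fact that $\PRIDII$ is a convex program with separable objective and a single linear coupling constraint on $\{\Pk\}$, which admits a closed-form solution via a one-variable Lagrangian analysis. First I would verify convexity: each term $T_k/\Pk$ is convex on $\Pk>0$ (reciprocal of a positive linear function), the sum preserves convexity, and the constraints \eqref{eq:PR-I-con-pow}--\eqref{eq:PR-I-con-agt} are linear. Note also that the PSD constraint in $\PRIDI$ forces $\sum_{j\in\Sanc}\xiwstkj\,\rhokj^*\,\Q(\phiest_{kj})$ to be nonsingular at optimality (otherwise $\tr\{\M_k\}$ would be infinite), so each $T_k$ is strictly positive and finite. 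Because the objective is strictly decreasing in each $\Pk$, the budget constraint \eqref{eq:PR-I-con-pow} must be active at any optimum.

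Next I would form the Lagrangian
\[
L(\{\Pk\},\lambda) = \sum_{k\in\Sagt}\frac{T_k}{\Pk} + \lambda\Big(\sum_{k\in\Sagt}\Pk - \Ptot\Big),
\]
dropping the inactive nonnegativity constraints \eqref{eq:PR-I-con-agt} since the objective blows up as any $\Pk\to 0^+$. Setting $\partial L/\partial \Pk=0$ gives $T_k/\Pk^{\,2}=\lambda$, i.e., $\Pk^{*}=\sqrt{T_k/\lambda}$. Substituting this into the active budget constraint $\sum_{k\in\Sagt}\Pk^{*}=\Ptot$ yields $1/\sqrt{\lambda}=\Ptot/\sum_{k'\in\Sagt}\sqrt{T_{k'}}$, and back-substitution produces exactly \eqref{eq:Pk-close}. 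Convexity together with trivial Slater feasibility (e.g., $\Pk=\Ptot/\Nagt>0$) guarantees that this KKT point is the global minimizer.

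No real obstacle arises; the derivation is a textbook KKT exercise and the ``hard'' Stage~I has already been absorbed into the constants $T_k$. For extra elegance one could bypass the multiplier entirely via Cauchy--Schwarz: writing $\big(\sum_k\sqrt{T_k}\big)^2=\big(\sum_k\sqrt{T_k/\Pk}\cdot\sqrt{\Pk}\big)^2\leq\big(\sum_k T_k/\Pk\big)\big(\sum_k\Pk\big)\leq\Ptot\sum_k T_k/\Pk$ lower-bounds the objective by $\big(\sum_k\sqrt{T_k}\big)^2/\Ptot$, with equality iff $\Pk\propto\sqrt{T_k}$, which combined with the tight budget immediately yields \eqref{eq:Pk-close}.
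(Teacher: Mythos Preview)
Your proposal is correct and follows essentially the same Lagrangian/KKT argument as the paper: form the Lagrangian, observe the nonnegativity constraints are inactive so $\Pk=\sqrt{T_k/\lambda}$, and use tightness of the budget to fix $\lambda$. Your added Cauchy--Schwarz derivation is a neat alternative not present in the paper, but the primary route is the same.
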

\begin{proof}
	See Appendix \ref{apd:Pk-close}.
\end{proof}

The optimal power allocation for the location-aware network is 
\begin{align}\label{eq:optsolu}
	\Pkj^* = \rhokj^*\Pk^*
\end{align}
where $\Pk^*$ is given in \eqref{eq:Pk-close}.
The detailed algorithm is described in the Algorithm \ref{alg:m-agent}.

\begin{algorithm}[!h]
	\caption{\label{alg:m-agent}Robust power allocation algorithm for multiple-agent networks}
\begin{algorithmic}[1]
	\Require the angle $\phiest_{kj}$ and the distance $\dest_{kj}$ between anchor $j$ ($j\in\Sanc$) and agent $k$ ($k\in\Sagt$)
	\State Set $\Pk\gets 1$, $\forall k\in\Sagt$
	\State Solve the Stage I problems $\PRIDI$ which gives the optimal solution $\rho_{kj}^*$
	\State 
	Set $\rho_{kj}\gets \rho_{kj}^*$, $\forall k\in\Sagt$, $\forall j\in\Sanc$
	\State
	Solve the Stage II problem $\PRIDII$ by using \eqref{eq:Pk-close} to compute the optimal solution $\Pk^*$
	\State Set $\Pkj^*\gets \rhokj^* \Pk^*$, $\forall k\in\Sagt$, $\forall j\in\Sanc$
\end{algorithmic}
\end{algorithm}

\begin{remark}
Since each Stage I problem $\PRIDI$ in Algorithm \ref{alg:m-agent} is with a single SDP constraint, its complexity is much lower than the original problem $\PRISDP$ which contains $\Nagt$ SDP constraints.
Moreover, the $\Nagt$ Stage I problems $\PRIDI$ can be separately solved by the $\Nagt$ agents, since each agent itself does not require any information from other agents. Thus, the computation efficiency can be improved by $\Nagt$ times using the parallel computations among the agents.
\end{remark}

\begin{remark}
	The proposed distributed algorithm can also be applied to the robust power allocation with individual power constraint, e.g., $\sum_{k\in\Sagt} \Pkj \leq \Pjtot$. In particular, we replace such constraint with $\sum_{k\in\Sagt} \rhokj\Pk \leq \Pjtot$
in the Stage II formulation $\PRIDII$, while the Stage I formulation $\PRIDI$ remains the same. In such case, the close-form solution in \eqref{eq:optsolu} is not available, however, the optimal solution of the Stage II problem can still be efficiently obtained since the problem is convex. Consequently, we can obtain a sub-optimal solution for the overall problem. 
\end{remark}

\begin{figure}[t!]
	\centering
	\psfrag{Anchor}[][]{\hspace{-0.5em}\scriptsize{Anchor}}
	\psfrag{Agent}[][]{\hspace{-0.5em}\scriptsize{Agent}}
	\includegraphics[height=7cm]{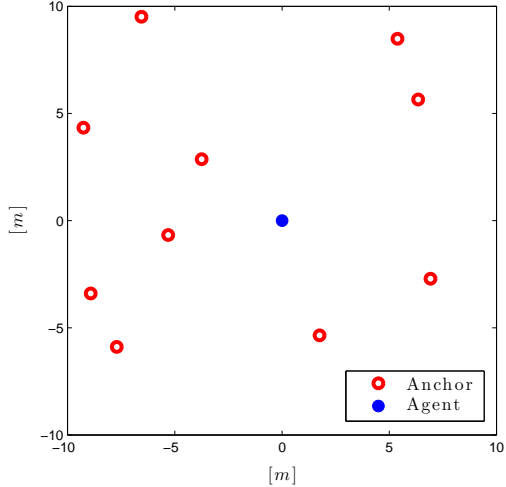}
	\caption{\label{fig:topo-single}The location-aware network consisting ten anchors (red circle) and one agents (blue dot), where the anchors are uniformly distributed in the square region.}
\end{figure}

\subsection{Algorithm for mDPEB Minimization}
A similar decomposition method can be applied to the mDPEB minimization $\PRII$, i.e., by introducing two variables $\rhokj$ and $\Pk$. Instead of solving SDP in SPEB minimization, each agent will separately solve a SOCP problem with linear objective for the mDPEB minimization. Specifically, we rewrite \eqref{eq:mu-robust} as
\begin{align*}
	\muwst_{2,k}
	=\frac{\Pk}{2}\bigg( &\sum_{j\in\Sanc} \xiwstkj \, \rho_{kj} \left(1 - 2\sin\varepsilon^{\phi}_{kj}\right)
	\notag\\
	&- \Big\| \sum_{j\in\Sanc} \xiwstkj \, \rho_{kj} \,\uvec(2\hat{\phi}_{kj}) \Big\| \bigg).
\end{align*}
Then, the two-stage formulations are given by
\begin{align*}
	\PRIIDI: \quad	
	\max_{\{\rho_{kj}\}} \quad& \muwst_{2,k}/\Pk
	\\
	\text{s.t.} \quad& \muwst_{2,k} \geq 0
				\notag\\
				& \eqref{eq:PR-I-con-ttl}~\tx{--}~\eqref{eq:PR-I-con-anc}
\end{align*}
and
\begin{align*}
	\PRIIDII: \quad	
	\min_{\{\Pk\}} \quad& \sum_{k\in\Sagt} \frac{1}{\muwst_{2,k}} \\
	\text{s.t.} \quad&
	\eqref{eq:PR-I-con-pow}~\tx{--}~\eqref{eq:PR-I-con-agt}
\end{align*}
respectively. The optimal power allocation is the product of the optimal solutions of the two-stage problems, given by \eqref{eq:optsolu}. The algorithm for mDPEB minimization is similar to that of Algorithm \ref{alg:m-agent}, and hence, we omit the details here.

\section{Simulation Results}\label{sec:simu}

In this section, we investigate the localization performance by the proposed power allocation schemes. The total power for localization is normalized to $\Ptot = 1$, and the channel parameter is given by $\xikj=10^3/d_{kj}^2$.\footnote{We choose the free-space propagation model where the path loss exponent is 2 \cite{MolGreSha:09}.}
The proposed optimization of power allocation, i.e., SDP and SOCP, are solved by the standard optimization solver CVX \cite{cvx}.

\begin{figure}[t!]
	\centering
	\includegraphics[height=7cm]{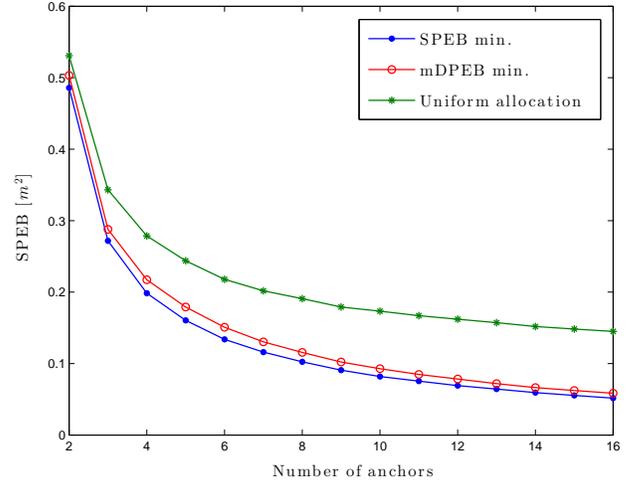}
	\caption{\label{fig:nr-speb-single}The SPEB in single-agent networks with respect to the number of anchors, obtained by different power allocation schemes.}
\end{figure}

\subsection{Power Allocation with Perfect Network Parameters}

First, we investigate the SPEB with power allocation as the number of anchors or agents changes. Three schemes of power allocation are compared: the allocation via SPEB minimization formulated in $\PISDP$, the allocation via mDPEB minimization formulated in $\PIICP$, and the uniform allocation which assigns $\Ptot$ equally over all the anchors. Given the number of anchors and agents, we run Monte Carlo simulation to generate $10^3$ deployments of agents or anchors that are uniformly distributed in a squared region, i.e., $U(\,[-10,10]\times[-10,10]\,)$, and then compute the average SPEB obtained by each scheme. 

In Figs.~\ref{fig:topo-single} and \ref{fig:nr-speb-single}, we consider the network with a single agent at the center and anchors uniformly distributed. An example of the network topology is illustrated in Fig. \ref{fig:topo-single}. We plot the SPEBs obtained by the above-mentioned three schemes in Fig.~\ref{fig:nr-speb-single}. A decreasing tendency in SPEB is observed as the number of anchors increases. This is reasonable since the agent has more freedom to choose ``good'' anchors when there are more anchors. Moreover, the results show that the mDPEB minimization outperforms the uniform allocation by about $46\%$, and achieves a SPEB close to the one obtained by SPEB minimization.

Next, we consider a network with multiple agents. Ten anchors are placed with fixed locations, and the agents are uniformly distributed in the region (see Fig.~\ref{fig:topo-multiple}). Similarly, we compare the SPEB obtained by the three schemes with respect to the number of agents in Fig. \ref{fig:nr-speb-multiple}. It shows that, even in multiple-agent case, the mDPEB minimization still achieves a similar performance as the SPEB minimization, and remarkably outperforms the uniform allocation. It implies that mDPEB is a meaningful performance metric for the optimization of power allocation. In addition, we observe that the average SPEB increases linearly with the number of agents. This is because each agent tends to obtain less power when the total power budget is fixed. As indicated by the slope, the speed of SPEB increase of optimized allocation is about $60\%$ slower than that of uniform allocation.

Furthermore, we investigate the performance of the two-stage optimization proposed in Section \ref{sec:distributed} which exploits the distributed computations among multiple agents. In Fig.~\ref{fig:nr-speb-multiple}, we plot the SPEB obtained by the two-stage optimization for both SPEB and mDPEB minimization. The results show that the SPEB solved by two-stage optimization perfectly matches that of one-stage optimization, which validates that the two-stage scheme can obtain the optimal solution while requiring much less computational time. 

\begin{figure}[t!]
	\centering
	\includegraphics[height=7cm]{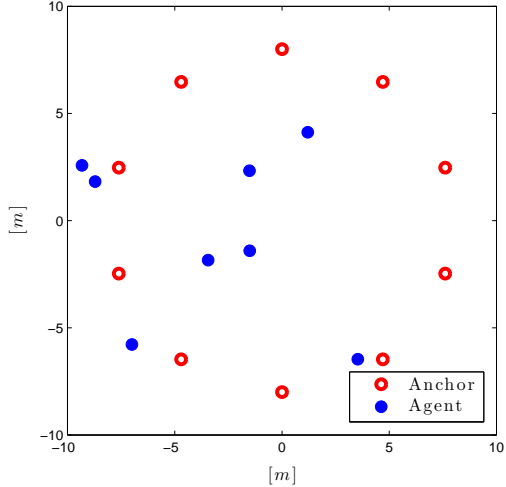}
	\caption{\label{fig:topo-multiple}The location-aware network consisting ten anchors (red circle) and eight agents (blue dot), where the agents are uniformly distributed in the square region.}
\end{figure}

\subsection{Robust Power Allocation with Imperfect Knowledge of Network Parameters}

We then investigate the performance of the power allocation with imperfect knowledge of network parameters. We compared the following schemes: allocation by the robust formulation $\PRISDP$ and $\PRIICP$, allocation by the non-robust formulation $\PISDP$ and $\PIICP$, and uniform allocation. 
We consider the agent's actual position lies within a circle of radius $\varepsilon^d$ centering at its estimated position. Then the maximum angular uncertainty is determined by $\varepsilon^{\phi}_{kj} = \arcsin(\varepsilon^d/\hat{d}_{kj})$.\footnote{Without loss of generality, we set $\varepsilon^d_{kj} = \varepsilon^d$ for all $k$, $j$.} The \emph{normalized uncertainty set size} on network parameters is defined to be $\varepsilon = 2\varepsilon^d/20$ which is normalized by the length of the squared region.

In Fig.~\ref{fig:rb-speb-nanc}, we investigate the actual SPEB with respect to the number of anchors. We consider a single-agent network, and set the normalized uncertainty set size $\varepsilon$ to be $0.2$, i.e., $\varepsilon^d = 2$ m. The results show that the robust SPEB minimization ($\PRISDP$) outperforms the non-robust SPEB minimization ($\PISDP$) by $20\%$, and outperforms uniform allocation by $35\%$; the robust mDPEB minimization ($\PRIICP$) outperforms the non-robust mDPEB minimization ($\PIICP$) by $30\%$, and outperforms uniform allocation by $70\%$. Moreover, we observe that the actual SPEB of robust mDPEB minimization is smaller than that of robust SPEB minimization, and the same observation is on the non-robust schemes. It implies that the mDPEB minimization is more robust to the network parameter uncertainty, compared with the SPEB minimization. This can be explained as follows: the robust mDPEB minimization can be viewed as a doubly robust optimization, since it first minimizes the maximum positional error over all the directions. Therefore, $\PRIICP$ outperforms $\PRISDP$ when the uncertainty in network parameters is not negligible (e.g., $\varepsilon = 0.2$).

\begin{figure}[t!]
	\centering
	\includegraphics[height=7cm]{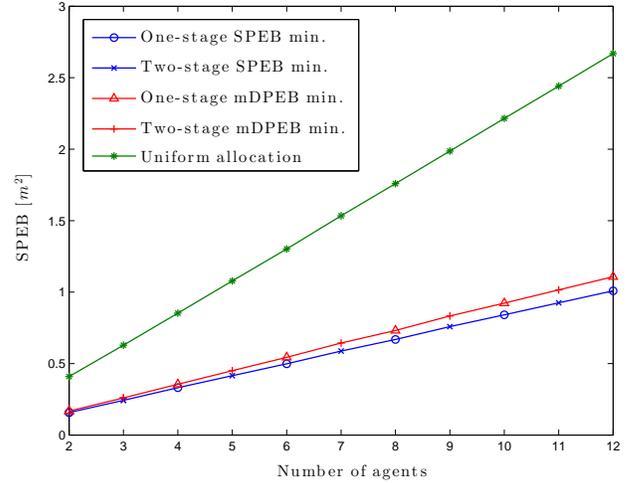}
	\caption{\label{fig:nr-speb-multiple}The average SPEB in multiple-agent networks ($\Nanc\!=\!10$) by different power allocation schemes.
Both one-stage and two-stage optimization are considered.
}
\end{figure}

In Fig.~\ref{fig:rb-speb-err}, we investigate the actual SPEB with respect to the normalized uncertainty set size $\varepsilon$. We consider a single-agent network with ten anchors deployed on a circle (similar to Fig. \ref{fig:topo-multiple}). As we observe, the actual SPEB of non-robust schemes quickly increases as the normalized uncertainty set size goes large. When the normalized uncertainty set size is larger than $0.22$ and $0.27$, respectively, the non-robust SPEB minimization and non-robust mDPEB minimization even perform worse than the uniform allocation, while the robust schemes always achieves better SPEB than all the other schemes. Moreover, the robust mDPEB minimization outperforms the non-robust mDPEB minimization and robust SPEB minimization by $30\%$ and $23\%$, respectively, when $\varepsilon=0.15$. Both Figs.~\ref{fig:rb-speb-nanc} and \ref{fig:rb-speb-err} have demonstrated the advantage of the proposed robust power allocation schemes, especially the mDPEB minimization, in the practical location-aware networks with imperfect knowledge of network parameters.

\begin{figure}[t!]
	\centering
	\includegraphics[height=7cm]{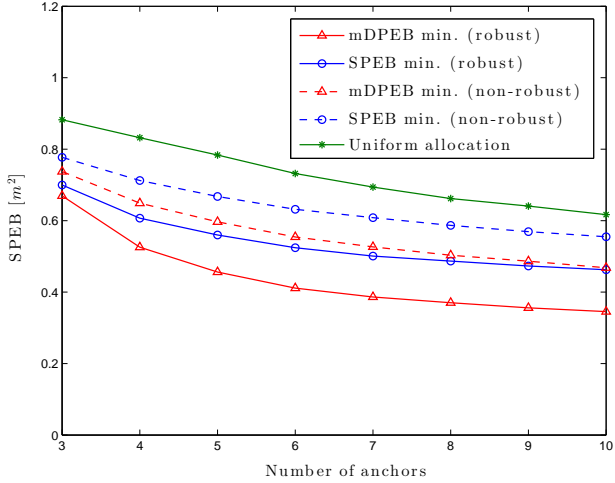}
	\caption{\label{fig:rb-speb-nanc} The actual SPEB with respect to number of anchors, obtained by different power allocation schemes
with imperfect knowledge of network parameters ($\varepsilon = 0.2$).}
\end{figure}

\section{Conclusion}\label{sec:conclude}

In this paper, we presented an optimization framework for robust power allocation in network localization based on the performance metrics SPEB and mDPEB. We first showed that the optimal power allocation with perfect network parameters can be efficiently obtained via conic programming, and then proposed robust power allocation schemes to combat uncertainty in network parameters for practical systems. Moreover, we designed an efficient algorithm for robust power allocation that allows distributed computations among agents. The simulation results demonstrated that the robust power allocation remarkably outperforms the non-robust power allocation and uniform allocation. Furthermore, we showed that, compared with the SPEB minimization, the mDPEB minimization is more robust to network parameter uncertainty for power allocation.

\appendices

\section{Proof of Proposition \ref{thm:mDPEB}}\label{apd:mDPEB}

The maximization on DPEB in \eqref{eq:mDPEB} follows that:
\begin{align}
	&\hspace{-1.5em} \max_{\varphi\in[0,2\pi)} 
	\left\{\mathcal{P}(\mathbf{p}_k;\varphi)\right\} 
	\notag \\
	=& \max_{\varphi\in[0,2\pi)}~
	\uvec(\varphi)\T [\Jeinv(\mathbf{p}_k;\{\Pkj\})] \uvec(\varphi) 
	\notag \\
	=& \max_{\varphi\in[0,2\pi)}~
	\uvec(\varphi)\T (\mathbf{U}_{\theta_k}^{-1})\T
	\begin{bmatrix} \mu_{1,k}^{-1} & 0\\ 0 & \mu_{2,k}^{-1} \end{bmatrix}
	\mathbf{U}_{\theta_k}^{-1} \uvec(\varphi)
	\notag \\
	=& \max_{\varphi'\in[0,2\pi)}~
	\uvec(\varphi')\T [\Jeinv(\mathbf{p}_k;\{\Pkj\})] \uvec(\varphi')
	\label{eq:maxDPEB}
\end{align}
where the last equality is due to the fact that the product of a unit vector and a rotation matrix $\mathbf{U}_{\theta_{k}}$ is still a unit vector. Now, let $\varphi'=\theta_k$ in \eqref{eq:maxDPEB}, then we have
\begin{align*}
	\max_{\varphi\in[0,2\pi)} 
	\left\{\mathcal{P}(\mathbf{p}_k;\varphi)\right\} 
	= \,& \max_{\theta_k} \left\{\mu_{1,k}^{-1}\cos^2\theta_k + \mu_{2,k}^{-1}\sin^2\theta_k\right\}
	\notag \\
	= \,&\, \mu_{2,k}^{-1}
\end{align*}
where the last equation is due to $\mu_{1,k} \geq \mu_{2,k}$.

%
%

\begin{figure}[t!]
	\centering
	\includegraphics[height=7cm]{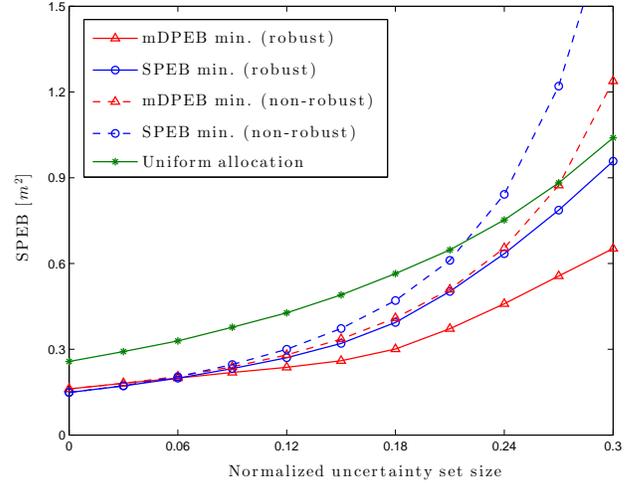}
	\caption{\label{fig:rb-speb-err}The actual SPEB with respect to the normalized uncertainty set size on network parameters, obtained by different power allocation schemes.
	}
\end{figure}

\section{Proof of Proposition \ref{thm:convexity}}\label{apd:convexity}

Since \eqref{eq:P-con-ttl}--\eqref{eq:P-con-link} are all linear constraints, we only need to show the objective in \eqref{eq:P-obj}, i.e., the SPEB, is a convex function in $\Pkj$. We write the transmit power of agent $k$ as a vector $\mathbf{x}_k = [x_{k1}\;x_{k2}\;\cdots\;x_{k\Nanc}]\T$, and the SPEB is a function of $\mathbf{x}_k$, given by
\begin{equation*}
	f(\mathbf{x}_k) \triangleq 
	\tr\bigg\{\Big(\sum_{j\in\Sanc} \xikj \, \Pkj \, \Jr(\phi_{kj})\Big)^{-1}\bigg\}.
\end{equation*}
We choose two arbitrary  $\mathbf{x}_k,~\mathbf{x}'_k\in\mathbb{R}^{\Nanc}_+$. Given any $\alpha\in[0,1]$, we have 
\begin{align}
	&\hspace{-1em} f(\alpha\mathbf{x}_k + (1-\alpha)\mathbf{x}'_k)
	\notag \\ 
	=&~ \tr{\bigg\{}{\Big(} \sum_{j\in\Sanc} \xikj \left(\alpha \Pkj + (1-\alpha) \Pkj' \right) \Jr(\phi_{kj}) {\Big)}^{-1}{\bigg\}} 
	\notag \\
	=&~ \tr\bigg\{\!\Big( \!\alpha \!\! \sum_{j\in\Sanc} \! \xikj \, \Pkj \, \Jr(\phi_{kj})  + (1\!-\alpha) \!\! \sum_{j\in\Sanc} \! \xikj \, \Pkj' \, \Jr(\phi_{kj}) \!\Big)^{\!\!-1}\!\bigg\} 
	\notag \\
	\leq&~ \alpha f(\mathbf{x}_k) + (1-\alpha) f(\mathbf{x}'_k). \label{eq:trcvx}
\end{align}
The inequality \eqref{eq:trcvx} holds since the function $\tr\left\{\mathbf{X}^{-1}\right\}$ is convex in $\mathbf{X}\succ 0$ \cite{BoyVan:B04}. If the matrix $\mathbf{X}$ is singular, the inequality \eqref{eq:trcvx} still holds. Since $\xikj$ is a positive scaler, $f(\mathbf{x}_k)$ is convex in $\mathbf{x}_k$.

\section{Proof of Proposition \ref{thm:Qmatrix}}\label{apd:Qmatrix}

Let $\phi_{kj}^+ = \phi_{kj}+\phiest_{kj}$ and 
$\phi_{kj}^- = \phi_{kj}-\phiest_{kj}$, we have
\begin{align*}
	&\hspace{-1em} \Jr(\phi_{kj}) - \Q(\phiest_{kj},\delta_{kj}) 
	\notag\\
	=& \begin{bmatrix}
		\delta_{kj} - \sin\phi_{kj}^+ \, \sin\phi_{kj}^- 
		& \cos\phi_{kj}^+ \, \sin\phi_{kj}^- \\
		\cos\phi_{kj}^+ \, \sin\phi_{kj}^- 
		& \delta_{kj} + \sin\phi_{kj}^+ \, \sin\phi_{kj}^-
	\end{bmatrix}. 
\end{align*}
We can show that $\Jr(\phi_{kj}) - \Q(\phiest_{kj},\delta_{kj})$ is positive semidefinite if
\begin{align*}
	\begin{cases}
		\delta_{kj} \geq \sin\phi_{kj}^+ \, \sin\phi_{kj}^-\,, \\
		\delta_{kj} \geq |\sin\phi_{kj}^-|\,.
	\end{cases}		
\end{align*}
Since $|\phi_{kj}^-| \leq \varepsilon^{\phi}_{kj}$, the above two inequality conditions are guaranteed by
\begin{align*}
	\delta_{kj} \geq \sin\varepsilon^{\phi}_{kj}\,.
\end{align*}
Given that $\sum_{j\in\Sanc} \xikj \, \Pkj \, \Q(\phiest_{kj},\delta_{kj}) \succeq 0$, we have 
\begin{align*}
	&\text{\tr}\bigg\{\Big(\sum_{j\in\Sanc} \xikj \, \Pkj \, \Jr(\phi_{kj})\Big)^{-1}\bigg\} 
	\notag\\
	&\leq 
	\text{\tr}\bigg\{\Big(\sum_{j\in\Sanc} \xikj \, \Pkj \, \Q(\phiest_{kj},\delta_{kj})\Big)^{-1}\bigg\}	
\end{align*}
for all $\phi_{kj}\in\mathcal{S}^{\phi}_{kj}$.
Furthermore, we can show that $\Q(\phiest_{kj},\delta_1) \preceq \Q(\phiest_{kj},\delta_2)$ for $0\leq\delta_2\leq\delta_1$, which implies that the function $\tr\big\{\big(\sum_{j\in\Sanc} \xikj \, \Pkj \, \Q(\phiest_{kj},\delta_{kj})\big)^{-1}\big\}$ is a non-decreasing function of $\delta_{kj}$. Hence, the minimum value of the right-hand side of \eqref{eq:wstbnd} is obtained when $\delta_{kj} = \sin\varepsilon^{\phi}_{kj}$.


\section{Proof of Proposition \ref{thm:prbound}}\label{apd:prbound}

We first consider the network with a single agent, and then extend the proof to the multiple-agent case. 
For a given $k\in\Sagt$, we need to show that the condition (\ref{eq:PRI-con-anc}) holds
for $\{\Pkj^*\}$ w.h.p. for both cases (a) and (b). Note that since
\begin{align*}
	\sum_{j\in\Sanc} \xiwstkj \, \Pkj^* \, \Q(\phiest_{kj}) \succeq\! \sum_{j\in\Sanc} \xiwstkj \, \Pkj^* \, \Jr(\phiest_{kj}) -  \frac{\zeta_{\max}}{r_0^{2\beta}} \Ptot \, \delta_{kj} \, \I
\end{align*}
it is sufficient to show that w.h.p.
\begin{align}\label{eq:pfcond}
	\tr \bigg\{ \Big( \sum_{j\in\Sanc} \xiwstkj \, \Pkj^* \, \Jr(\phiest_{kj})\Big)^{-1} \bigg\} \leq \frac{r_0^{2\beta}}{\zeta_{\max}} \frac{2}{\Ptot \, \delta}
\end{align}
where $\delta = \sin\varepsilon^{\phi}$ with $\varepsilon^{\phi} = \max\{\varepsilon^{\phi}_{kj}\}$.

\begin{figure}[t!]
	\centering
	\psfrag{k}[][]{\small{$k$}}
	\psfrag{i}[][]{\small{$i$}}
	\psfrag{i'}[][]{\small{$i'$}}
	\psfrag{phi}[][]{\small{$\Delta^{\phi}$}}
	\psfrag{r}[][]{\small{$r_0$}}
	\psfrag{br}[][]{\small{$\varrho r_0$}}
	\psfrag{R*R}[][]{\small{$R\times R$}}
	\vspace{0.5em}
	\includegraphics[height=5.5cm]{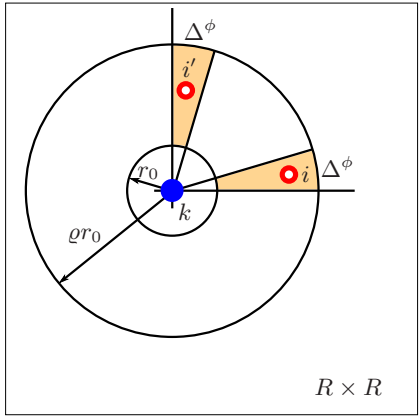}
	\caption{\label{fig:proofa}Geometrical illustration of the proof of Proposition \ref{thm:prbound}(a) where agent is inside the square region. We choose two anchors $i$ and $i'$ in the shaded region.}
\end{figure}


For (a): we pick two anchors $i$ and $i'$ in the region (see Fig. \ref{fig:proofa}) such that 
\begin{enumerate}
	\item $r_0 \leq \dwst_{ki},\dwst_{ki'} \leq \varrho r_0$ with $\varrho > 1$;
	\item $0\leq \phi_{ki} \leq \Delta^{\phi}$ and ${\pi}/{2} - {\Delta^{\phi}}\leq \phi_{ki'} \leq {\pi}/{2}$ for a small positive ${\Delta^{\phi}}$. 
\end{enumerate}
Note that if the agent is at the corner or on the boundary of the square area, we can rotate the angles accordingly to find such a region.

It can be shown that there exists at least one such pair of anchors with probability 
$1+(1-2p_0)^{\Nanc}-2(1-p_0)^{\Nanc}$, 
where 
$p_0 = (\varrho^2-1) r_0^2 \Delta^{\phi}/2R^2$.
Since the probability goes to 1 exponentially with $\Nanc$, such a pair of anchors can be found w.h.p.

Consider a power allocation scheme $\{\breve{P}_{ki}=\breve{P}_{ki'}= \Ptot/2\}$, and we show this scheme satisfies the condition \eqref{eq:pfcond} for a sufficiently small $\delta$. Based on the definition of the optimal power allocation, we have
\begin{align*}
	&\hspace{-1em} \tr \bigg\{ \Big( \sum_{j\in\Sanc} \xiwstkj \, \Pkj^* \, \Jr(\phiest_{kj})\Big)^{-1} \bigg\} \\
	\leq \,& \max_{\{\phi_{kj}\}} \tr \bigg\{ \Big( \sum_{j\in\Sanc} \xiwstkj \, \breve{P}_{kj} \, \Jr(\phi_{kj})\Big)^{-1} \bigg\} \\
	\leq \,& \max_{\{\phi_{kj}\}} \tr \bigg\{ \Big( \frac{\zeta_{\min}}{\varrho^{2\beta} r_0^{2\beta}} \frac{\Ptot}{2} \left(\Jr(\phi_{ki}) + \Jr(\phi_{ki'}) \right) \Big)^{-1} \bigg\} \\
	= \,&\, \frac{\varrho^{2\beta} r_0^{2\beta}}{\zeta_{\min}} \frac{2}{\Ptot} \frac{2}{\sin^2(\pi/2-2\Delta^{\phi}-2\varepsilon^{\phi})} \,.
\end{align*}
Therefore, a sufficient condition for \eqref{eq:pfcond} is
\begin{align*}
	\frac{\varrho^{2\beta} r_0^{2\beta}}{\zeta_{\min}} \frac{2}{\Ptot} \frac{2}{\sin^2(\pi/2-2\Delta^{\phi}-2\varepsilon^{\phi})} 
	\leq \frac{r_0^{2\beta}}{\zeta_{\max}} \frac{2}{\Ptot \, \delta}
\end{align*}
which is equivalent to
\begin{align}\label{eq:pfconda}
	\frac{2\varrho^{2\beta}\sin\varepsilon^{\phi}}{\cos^2(2\Delta^{\phi}+2\varepsilon^{\phi})} \leq \frac{\zeta_{\min}}{\zeta_{\max}}
\end{align}
where $\delta = \sin\varepsilon^{\phi}$. Note that the left-hand side of \eqref{eq:pfconda} is an increasing function in $\varrho$, $\Delta^{\phi}$ and $\varepsilon^{\phi}$, when $\Delta^{\phi}$ and $\varepsilon^{\phi}$ are both small positive numbers. Thus, the maximum $\varepsilon^{\phi}$ (or equivalently, maximum $\delta$) to satisfy \eqref{eq:pfconda} can be obtained by taking the limit $\varrho \rightarrow 1$ and $\Delta^{\phi} \rightarrow 0$. It follows that
\begin{align*}
	\frac{2 \sin\varepsilon^{\phi}}{\cos^2(2\varepsilon^{\phi})} 
	\leq \frac{\zeta_{\min}}{\zeta_{\max}}
\end{align*}
and the inequality holds when $0<\delta=\sin\varepsilon^{\phi}\leq\delta_{\max}$, where $\delta_{\max}$ is the smallest positive root of the equation
\begin{align*}
	4 \delta^4 - 4 \delta^2 - 2 \frac{\zeta_{\max}}{\zeta_{\min}} \delta + 1 = 0\,.
\end{align*}
We give some numerical examples: $\delta_{\max}=0.318$ when ${\zeta_{\max}}/{\zeta_{\min}}=1$; $\delta_{\max}=0.096$ when ${\zeta_{\max}}/{\zeta_{\min}}=5$.


For (b): Consider a small angle $\sqrt{2a \varepsilon^{\phi}}$ as $\varepsilon^{\phi} \rightarrow 0$, where 
$a = {(2^{\beta+1}R^{2\beta}\zeta_{\max})}/{(r_0^{2\beta}\zeta_{\min})}$. 
The probability that all $\Nanc$ anchors locate in such a small angle of the $R \times R$ region is at most $(\sqrt{2a \varepsilon^{\phi}})^{\Nanc}$,
which goes to 0 at the rate of polynomial power $\Nanc/2$ as $\varepsilon^{\phi} \rightarrow 0$. Hence, we can find two anchors, $i$ and $i'$, whose angle separation is larger than $\sqrt{2a \varepsilon^{\phi}}$ and smaller than $\pi-\sqrt{2a \varepsilon^{\phi}}$ w.h.p.

We allocate the power equally on these two anchors, and it follows
\begin{align*}
	&\hspace{-1em} \tr \bigg\{ \Big( \sum_{j\in\Sanc} \xiwstkj \, \Pkj^* \, \Jr(\phiest_{kj})\Big)^{-1} \bigg\} \\
	\leq \,& \max_{\{\phi_{kj}\}}~\tr \bigg\{ \Big( \sum_{j\in\Sanc} \xiwstkj \, \breve{P}_{kj} \, \Jr(\phi_{kj})\Big)^{-1} \bigg\} \\
	\leq \,& \max_{\{\phi_{kj}\}}~\tr \bigg\{ \Big( \frac{\zeta_{\min}}{ (\sqrt{2} R)^{2\beta}} \frac{\Ptot}{2} \left(\Jr(\phi_{ki}) + \Jr(\phi_{ki'}) \right) \Big)^{-1} \bigg\} \\
	= \,&\, \frac{2^{\beta}R^{2\beta}}{\zeta_{\min}} \frac{2}{\Ptot} \frac{2}{\sin^2(\sqrt{2a \varepsilon^{\phi}}-2\varepsilon^{\phi})} \,.
\end{align*}

Finally, we need to show that
\begin{align*}
	\frac{2^{\beta}R^{2\beta}}{\zeta_{\min}} \frac{2}{\Ptot} \frac{2}{\sin^2(\sqrt{2a \varepsilon^{\phi}}-2\varepsilon^{\phi})}	
	\leq \frac{r_0^{2\beta}}{\zeta_{\max}} \frac{2}{\Ptot \, \sin\varepsilon^{\phi}}
\end{align*}
or equivalently,
\begin{align*}
	a \leq \frac{\sin^2(\sqrt{2a \varepsilon^{\phi}}-2\varepsilon^{\phi})} {\sin\varepsilon^{\phi}} \,.
\end{align*}
The above inequality holds as $\varepsilon^{\phi} \rightarrow 0$, since the limit of its right-hand side is $2a$.

Now, we extend the above proof to the multiple-agent case. In Section \ref{sec:distributed}, we decomposed the one-stage problem $\PRISDP$ into two-stage optimizations. Let $\rhokj^*$ and $\Pk^*$ denote the optimal solution of $\PRIDI$ and $\PRIDII$, respectively. Since the Stage I problem $\PRIDI$ is formulated for each single agent, we can show by the above proof that
\begin{align*}
	\sum_{j\in\Sanc} \xiwstkj \, \rhokj^* \, \Q(\phiest_{kj}) \succeq 0
\end{align*}
holds w.h.p. for agent $k$. Moreover, the optimal power allocation is given in \eqref{eq:optsolu} as $\Pkj^*=\rhokj^*\Pk^*$, where $\Pk^*$ obtained in Stage II does not affect $\rhokj^*$. Hence, we can show that the condition (\ref{eq:PRI-con-anc}) holds w.h.p. for multiple-agent networks.


\section{Proof of Proposition \ref{thm:Pk-close}}\label{apd:Pk-close}

The Lagrangian function is given by
\begin{align*}
	\mathcal{L}(\Pk,u_k,v) = \sum_{k\in\Sagt} \frac{T_k}{\Pk} - \sum_k u_k \Pk + v\bigg(\sum_{k\in\Sagt} \Pk - \Ptot\bigg)
\end{align*}
where $u_k,~v \geq 0$. The KKT conditions \cite{BerNedOzd:B03} can be derived as
\begin{align}
		\frac{\partial\mathcal{L}}{\partial{\Pk}} 
	= - \frac{T_k}{\Pk^2} - u_k + v &= 0
	\label{eq:KKT-1} \\
	u_k \Pk &= 0
	\notag \\
	v\bigg(\sum_{k\in\Sagt} \Pk - \Ptot\bigg) &= 0.
	\notag
\end{align}
Since $\Pk$ is always positive, we have $u_k=0$, which leads to $\Pk=\sqrt{{T_k}/{v}}$ in \eqref{eq:KKT-1}. Moreover, the objective is monotonically decreasing in $\Pk$, which implies the optimal allocation must use all the power resource, i.e., $\sum_{k\in\Sagt} \Pk = \Ptot$. Hence, the optimal solution is given by \eqref{eq:Pk-close}.

\section*{Acknowledgments}

The authors gratefully acknowledge Z.-Q.~Luo for his insightful discussion of the content of the paper, and H.~Yu and W.~Dai for their helpful suggestions and careful reading of the manuscript.

\bibliographystyle{IEEEtran}
\bibliography
{IEEEabrv,StringDefinitions,BiblioCV,WGroup,reference}

\vspace{-1em}
\begin{IEEEbiography}
	[{\includegraphics[width=1in,height=1.25in,clip,keepaspectratio]{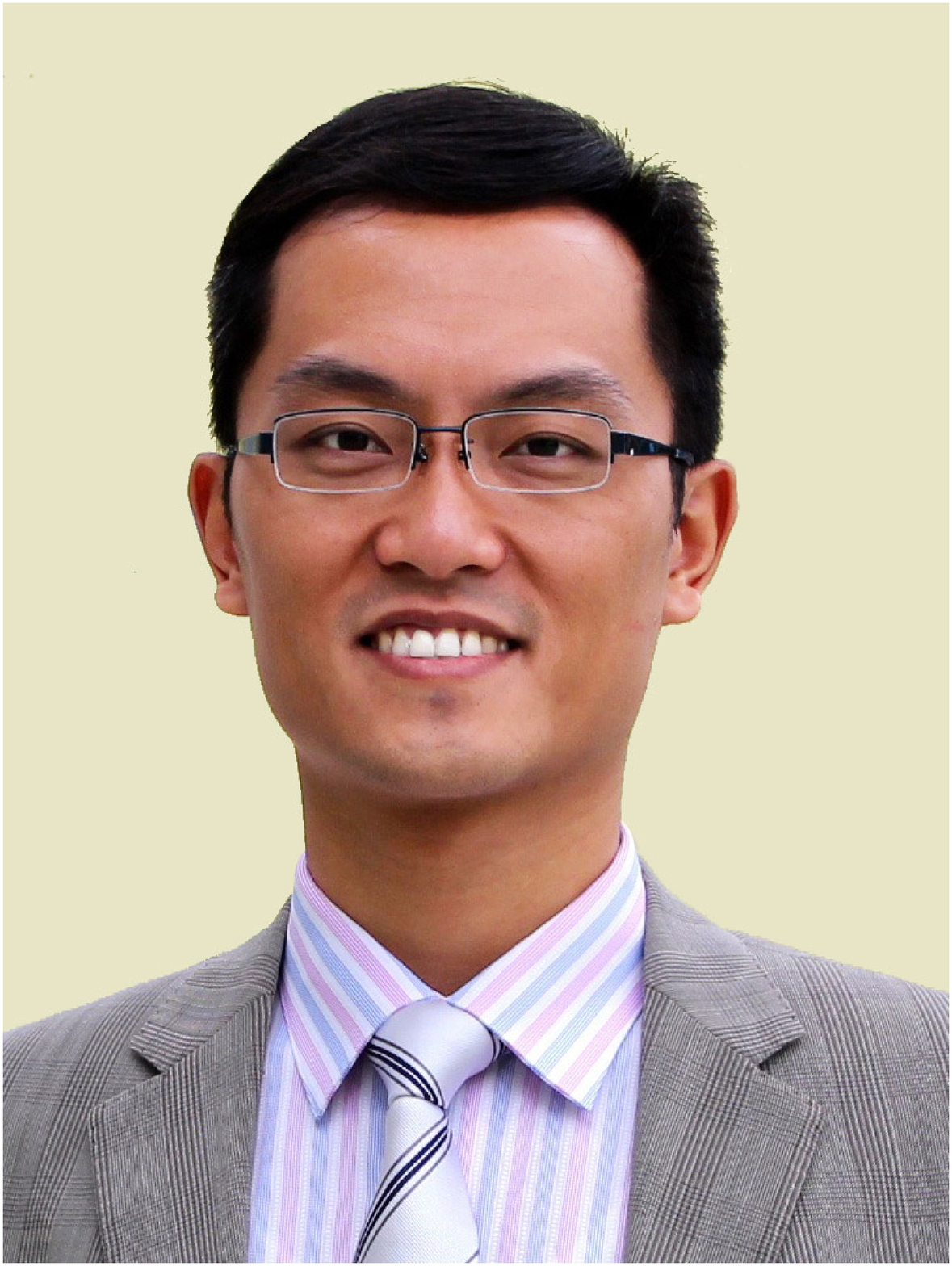}}]
	{William Wei-Liang Li} (S'09-M'12) received his Ph.D. degree in Information Engineering from the Chinese University of Hong Kong (CUHK), Hong Kong in 2012. He received a B.S. degree (with highest honor) in Automatic Control Engineering from Shanghai Jiao Tong University (SJTU), China in 2006.
	
	Since 2012, he has been with Department of Electrical and Computer Engineering, University of California, Santa Barbara, where he is currently a Postdoctoral Scholar. From 2006 to 2007, he was with the Circuit and System Laboratory, Peking University (PKU), China. From 2009 to 2011, he was a Visiting Student at the Wireless Communications and Network Science Laboratory, Massachusetts Institute of Technology (MIT). His main research interests include optimization and estimation theory, and their applications in wireless communications and networking.

	Dr. Li served as a Theory Session Chair of ACM MobiHoc S$^3$ Workshop in 2010, a Steering Committee Member of MIT 15th Annual LIDS Student Conference in 2010, a member of the Technical Program Committee (TPC) for the IEEE ICCVE in 2012, and the IEEE WCNC in 2013. 
During the four years of undergraduate study, he was consistently awarded the first-class scholarship, and graduated with highest honors from SJTU. He received the First Prize Award of the National Electrical and Mathematical Modelling Contest in 2005, the Global Scholarship for Research Excellence from CUHK in 2009, and the Award of CUHK Postgraduate Student Grants for Overseas Academic Activities and in 2009 and 2011.
\end{IEEEbiography}
 \vspace{-1em}
\begin{IEEEbiography}
	[{\includegraphics[width=1in, height=1.25in, clip, keepaspectratio]{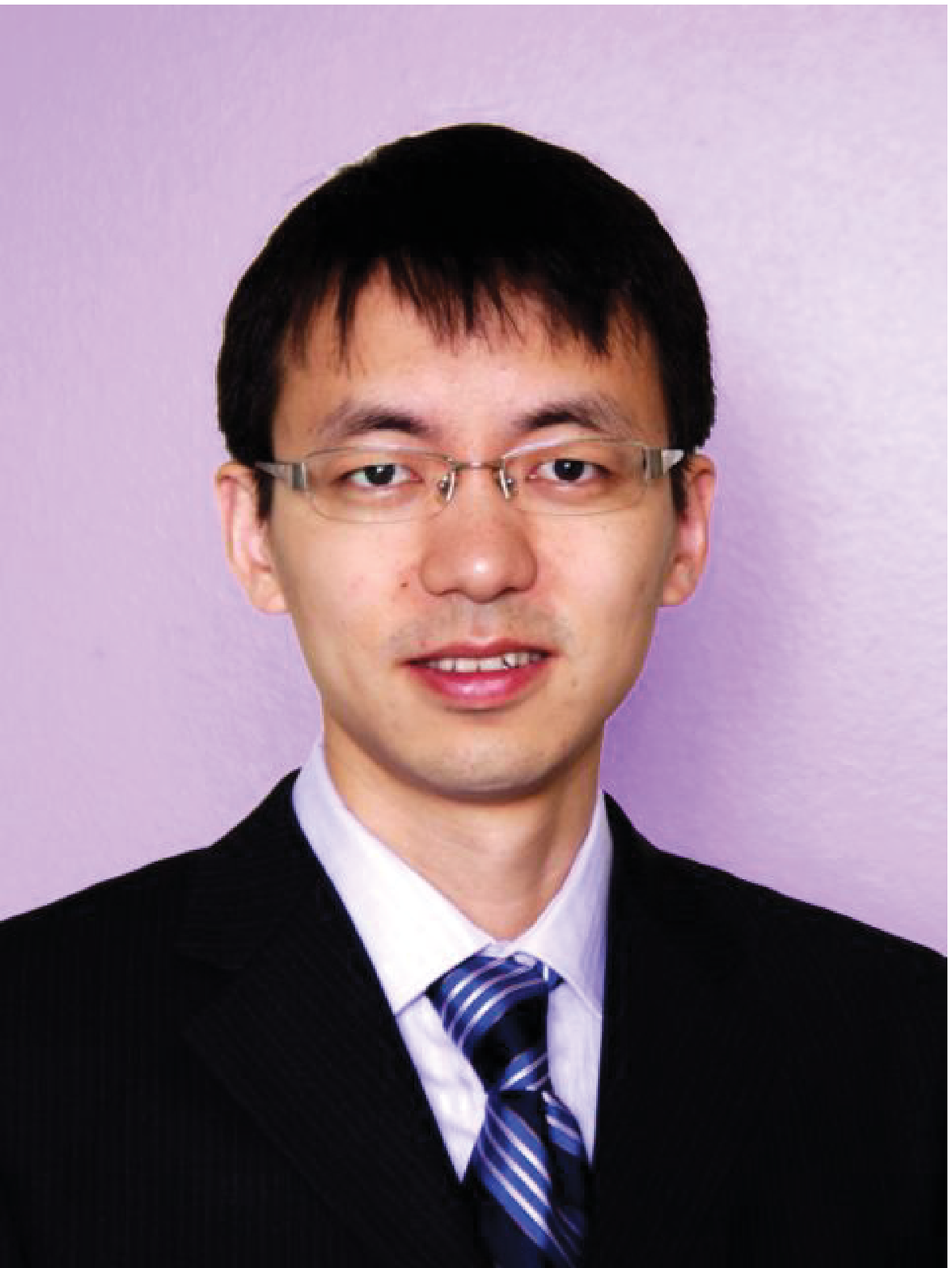}}]
	{Yuan Shen} (S'05) received his B.S.\ degree (with highest honor) from Tsinghua University, China, in 2005, and S.M.\ degree from the Massachusetts Institute of Technology (MIT), Cambridge, MA, in 2008, both in electrical engineering. 

	Since 2005, he has been with Wireless Communications and Network Science Laboratory at MIT, where he is now a Ph.D.\ candidate. He was with the Wireless Communications Laboratory at The Chinese University of Hong Kong in summer 2010, the Hewlett-Packard Labs in winter 2009, the Corporate R\&D of Qualcomm Inc.~in summer 2008, and the Intelligent Sensing Laboratory at Tsinghua University from 2003 to 2005. His research interests include statistical inference, network science, communication theory, and information theory. His current research focuses on network localization and navigation, generalized filtering techniques, resource allocation, intrinsic wireless secrecy, and cooperative networks.

	Mr.~Shen served as a member of the Technical Program Committee (TPC) for the IEEE Globecom in 2010--2013, the IEEE ICC in 2010--2013, the IEEE WCNC in 2009--2013, and the IEEE ICUWB in 2011--2013, and the IEEE ICCC in 2012. He is a recipient of the Marconi Society Paul Baran Young Scholar Award (2010), the MIT EECS Ernst A.~Guillemin Best S.M.~Thesis Award (first place) (2008), the Qualcomm Roberto Padovani Scholarship (2008), and the MIT Walter A. Rosenblith Presidential Fellowship (2005). His papers received the IEEE Communications Society Fred W. Ellersick Prize (2012) and three Best Paper Awards from the IEEE Globecom (2011), the IEEE ICUWB (2011), and the IEEE WCNC (2007).
\end{IEEEbiography}
\begin{IEEEbiography}
	[{\includegraphics[width=1in,height=1.25in,clip,keepaspectratio]{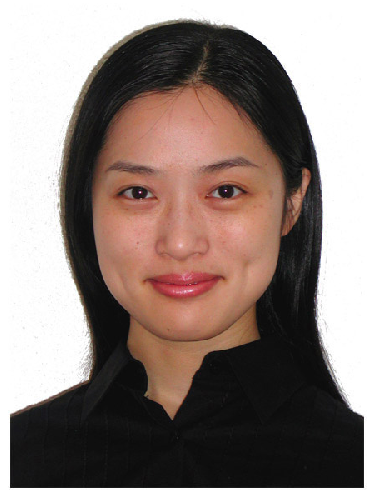}}]
	{Ying Jun (Angela) Zhang} (S'00-M'05-SM'11) received her PhD degree in Electrical and Electronic Engineering from the Hong Kong University of Science and Technology, Hong Kong in 2004. She received a B.Eng in Electronic Engineering from Fudan University, Shanghai, China in 2000.
	
	Since 2005, she has been with Department of Information Engineering, The Chinese University of Hong Kong, where she is currently an Associate Professor. She was with Wireless Communications and Network Science Laboratory at Massachusetts Institute of Technology (MIT) during the summers of 2007 and 2009. Her current research topics include resource allocation, convex and non-convex optimization for wireless systems, stochastic optimization, cognitive networks, MIMO systems, etc.
	
	Prof. Zhang is on the Editorial Boards of {\scshape IEEE Transactions on Wireless Communications}, {\scshape IEEE Transactions on Communications}, and Wiley Security and Communications Networks Journal. She was a Guest Editor of a Feature Topic in {\scshape IEEE Communications Magazine}. She has served as a TPC Vice-Chair of Wireless Communications Track of IEEE CCNC 2013, TPC Co-Chair of Wireless Communications Symposium of IEEE GLOBECOM 2012 Publication Chair of IEEE TTM 2011, TPC Co-Chair of Communication Theory Symposium of IEEE ICC 2009, Track Chair of ICCCN 2007, and Publicity Chair of IEEE MASS 2007. She is now a Co-Chair of IEEE ComSoc Multimedia Communications Technical Committee. She was an IEEE Technical Activity Board GOLD Representative, 2008 IEEE GOLD Technical Conference Program Leader, IEEE Communication Society GOLD Coordinator, and a Member of IEEE Communication Society Member Relations Council (MRC). She is a co-recipient of 2011 IEEE Marconi Prize Paper Award on Wireless Communications, the Annual Best Paper Award of IEEE Transactions on Wireless Communications. As the only winner from Engineering Science, she has won the Hong Kong Young Scientist Award 2006, conferred by the Hong Kong Institution of Science.
\end{IEEEbiography}
\begin{IEEEbiography}
	[{\includegraphics[width=1in,height=1.25in,clip,keepaspectratio]{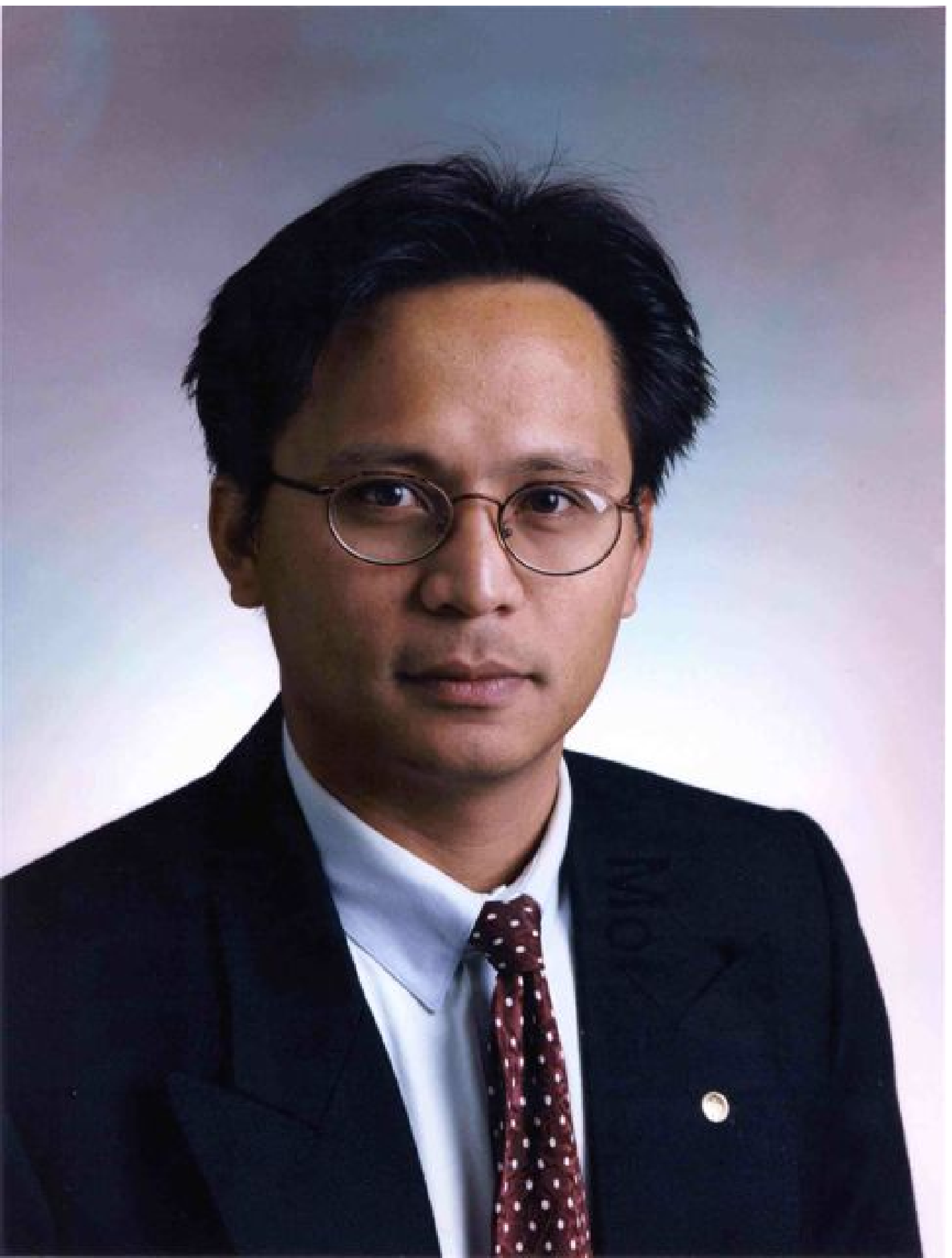}}]
	{Moe Z.~Win}
%
%
(S'85-M'87-SM'97-F'04) 
received 
	both the Ph.D.\ in 
		\switchAtwo{Electrical Engineering}
				{Electrical Engineering (under the supervision of Professor Robert A.\ Scholtz)}
	and M.S.\ in 
		\switchAtwo{Applied Mathematics}
				{Applied Mathematics (under the supervision of Professor Solomon W.\ Golomb)}
		as a Presidential Fellow at the University of Southern California (USC) in 1998.
He received an M.S.\ in Electrical Engineering from USC in 1989, 
	and a B.S.\ ({\em magna cum laude}) in Electrical Engineering from Texas A\&M University in 1987.

%
%
He is a Professor at the Massachusetts Institute of Technology (MIT).
Prior to joining MIT, he was at AT\&T Research Laboratories for five years 
and at the Jet Propulsion Laboratory for seven years.
%
%
His research encompasses fundamental theories, algorithm design, and
experimentation for a broad range of real-world problems.
His current research topics include
	network localization and navigation, 
	network interference exploitation, 
	intrinsic wireless secrecy,
	adaptive diversity techniques,
		\switchBtwo{and ultra-wide bandwidth systems.}
				{ultra-wide bandwidth systems,
				optical transmission systems, 
				and 
				space communications systems.}
		
%
%
Professor Win is 
	an elected Fellow of the AAAS, the IEEE, and the IET, 
	and was an IEEE Distinguished Lecturer.
He was honored with two IEEE Technical Field Awards: 
	the IEEE Kiyo Tomiyasu Award (2011) 
		\switchBtwo{and}
				{for ``fundamental contributions to high-speed reliable communications over optical and wireless channels''  and}
	the IEEE Eric E. Sumner Award 
		\switchBtwo{(2006, jointly with R.\ A.\ Scholtz).}
				{(2006, jointly with R.\ A.\ Scholtz) for ``fundamental contributions to high-speed reliable communications over optical and wireless channels.''}
%
%
Together with students and colleagues, his papers have received numerous awards including
	\switchBtwo{the IEEE Communications Society's Stephen O.\ Rice Prize (2012),
			the IEEE Aerospace and Electronic Systems Society's M.\ Barry Carlton Award (2011),
			the IEEE Communications Society's Guglielmo Marconi Prize Paper Award (2008),
    			and the IEEE Antennas and Propagation Society's Sergei A.\ Schelkunoff Transactions Prize Paper Award (2003).}
			{the IEEE Communications Society's Stephen O.\ Rice Prize (2012),
			the IEEE Communications Society's William R.\ Bennett Prize (2012),
			the IEEE Communications Society's Fred W.\ Ellersick Prize (2012),
			the IEEE Communications Society's Leonard G.\ Abraham Prize (2011),
			the IEEE Aerospace and Electronic Systems Society's M.\ Barry Carlton Award (2011),
			the IEEE Communications Society's Guglielmo Marconi Prize Paper Award (2008),
    			and the IEEE Antennas and Propagation Society's Sergei A.\ Schelkunoff Transactions Prize Paper Award (2003).}
%
%
Highlights of his international scholarly initiatives are
	the Copernicus Fellowship (2011),
	the Royal Academy of Engineering Distinguished Visiting Fellowship (2009),
	and
	the Fulbright
		\switchBtwo{Fellowship (2004).}
				{Foundation Senior Scholar Lecturing and Research Fellowship (2004).}
%
%
Other recognitions include
	\switchBtwo{the Laurea Honoris Causa from the University of Ferrara (2008),	
				the Technical Recognition Award of the IEEE ComSoc Radio Communications Committee (2008),
        				and the U.S. Presidential Early Career Award for Scientists and Engineers (2004).}	
			{the Outstanding Service Award of the IEEE ComSoc Radio Communications Committee (2010),
				the Laurea Honoris Causa from the University of Ferrara, Italy (2008),
				the Technical Recognition Award of the IEEE ComSoc Radio Communications Committee (2008),
        				the Wireless Educator of the Year Award (2007),
				the U.S. Presidential Early Career Award for Scientists and Engineers (2004), 
				the AIAA Young Aerospace Engineer of the Year (2004),
				and
				the Office of Naval Research Young Investigator Award (2003).}

%
%
Dr.\ Win is an elected Member-at-Large on the IEEE Communications Society Board of Governors (2011--2013).
He was
    the chair (2004--2006) and secretary (2002--2004) for
        the Radio Communications Committee of the IEEE Communications Society.
%
%
\switchBtwo{Over the last decade, he has organized and chaired numerous international conferences.} 
		{He served as
			the Technical Program Chair for
				the IEEE Wireless Communications and Networking Conference (2009),
				the IEEE Conference on Ultra Wideband (2006),
				the IEEE Communication Theory Symposia of ICC (2004) and Globecom (2000),
				 and
				the IEEE Conference on Ultra Wideband Systems and Technologies (2002);
			Technical Program Vice-Chair for
				the IEEE International Conference on Communications (2002); and
			the Tutorial Chair for
				ICC (2009) and
				the IEEE Semiannual International Vehicular Technology Conference (Fall 2001).}
%
%
He is currently
	an Editor-at-Large for the 
	{\scshape IEEE Wireless Communications Letters},
	and serving on the Editorial Advisory Board for the 
	{\scshape IEEE Transactions on Wireless Communications}.
He served as Editor (2006--2012) for
	\switchBtwo{the {\scshape IEEE Transactions on Wireless Communications},
				and served as 
				Area Editor (2003--2006) and Editor (1998--2006)
				for the {\scshape IEEE Transactions on Communications}.}      	
			  {the {\scshape IEEE Transactions on Wireless Communications}.
				He also served as the
	    			Area Editor for {\em Modulation and Signal Design} (2003--2006),
	    			Editor for {\em Wideband Wireless and Diversity} (2003--2006), and
	    			Editor for {\em Equalization and Diversity} (1998--2003),
	        			all for the {\scshape IEEE Transactions on Communications}.}
He was Guest-Editor
        for the
        {\scshape Proceedings of the IEEE}
		\switchBtwo{(2009)}
				{(Special Issue on UWB Technology \& Emerging Applications -- 2009)} and
        {\scshape IEEE Journal on Selected Areas in Communications}
        		 \switchBtwo{(2002).}
		 		{(Special Issue on Ultra\thinspace-Wideband Radio in Multiaccess Wireless Communications  -- 2002).}

\end{IEEEbiography}

\end{document}